\newtheorem*{proposition*}{Proposition}
\def\C{{\mathbb C}}
\def\R{{\mathbb R}}
\def\id{{\mathbb I}}
\def\reqbit{rebit}
\def\opone{\leavevmode\hbox{\small1\normalsize\kern-.33em1}}
\def\tr{\mbox{tr}}
\newcommand{\cE}{\mathcal{E}}
\newcommand{\ii}{\mathrm{i}}
\newcommand{\cB}{\mathcal{B}}
\newcommand{\cS}{\mathcal{S}}
\begin{document}

\title{{Partial independence suffices to rule out   Real Quantum Theory experimentally}}

\author{Mirjam Weilenmann}
\affiliation{Inria, Télécom Paris - LTCI, Institut Polytechnique de Paris, 91120 Palaiseau, France}
\affiliation{Department of Applied Physics, University of Geneva, Switzerland}
%\email{mirjam.weilenmann@inria.fr}
\author{Nicolas Gisin}
\affiliation{Department of Applied Physics, University of Geneva, Switzerland}
\affiliation{Constructor University, Bremen, Germany}
\author{Pavel Sekatski}
%\email{pavel.sekatski@unige.ch}
\affiliation{Department of Applied Physics, University of Geneva, Switzerland}
 
%\date{\small \today}
\begin{abstract}
The role of complex quantities in quantum theory has been puzzling physicists since the beginnings. {It is thus natural to ask whether, in order to describe our experiments, the  mathematical structure of complex Hilbert spaces it is built on is really necessary.} Recently, it was shown that {this structure} is inevitable in network scenarios with independent sources. {More precisely}, Real Quantum Theory cannot explain the predictions of (Complex) Quantum Theory [Renou et al., Nature 600, 2021]. Here, we revisit the independence assumption underlying this work. We show that assuming partial independence is sufficient for showing the inadequacy of Real Quantum Theory. We derive a tradeoff between source independence and the Bell value achievable in {Real Quantum Theory}, which also lower bounds the source correlations required to explain previous experiments {by means of} real quantum systems. We further show that $1$ bit of entanglement is necessary and sufficient for recovering the complex quantum correlations by means of Real Quantum Theory in the scenario from [Renou et al., Nature 600, 2021].  Finally, building on [McKague et al., PRL 102, 2009], we provide a construction to simulate any complex quantum setup with $m$ independent sources by means of {Real Quantum Theory}, by allowing the sources to share a $m$ real-qubit entangled state in the first round of the experiment.
\end{abstract}

\maketitle

\section{Introduction}

Numbers are there to count and essential to reason about the world. Initially, integers could count sheeps, flowers and hectares. Later came astronomy and the desire to do calculus with continuous quantities -- time, positions and velocities of objects. This naturally led to extend numbers to what is known today as the real numbers. Regardless of their role in describing physical reality, people started playing with numbers for the sake of their intrinsic beauty. For instance one can easily see that some real numbers are irrational (not a fraction of integers), e.g. $\sqrt 2$ the solution to the equation $x^2=2$. In turn, other equations like $x^2=-1$ do not seem to have a solution. 
But equations without solutions are not elegant. Hence, mathematicians invented new numbers, today called complex numbers, which give a solution to all polynomial equations. A complex number $z=x+\ii y$ is the sum of a real and an imaginary part, where $x$ with $y$ are real and the new number $\ii$ is precisely here to solve the equation $\ii^2=-1$. The name imaginary was coined by Descartes to emphasize his doubts towards the existence of quantities corresponding to such numbers \cite{blank1999imaginary,merino2006short}. He was not alone to express such feelings. Later, Euler referred to imaginary numbers as ``impossible''\cite{blank1999imaginary}. Even Cauchy, the father of modern complex analysis, expressed his repudiation for the imaginary unit $\ii$\cite{blank1999imaginary}.

Physics has for a long time been based on purely real quantities. Textbooks on classical mechanics and thermodynamics usually don`t even mention complex numbers, since these theories do not need them, even Maxwell’s equations make no use of complex numbers.  Certainly, complex numbers are commonly used when describing electromagnetic waves. However, they are typically introduced as a powerful mathematical tool to facilitate computations, but without the ambition to correspond to physical quantities: electric and magnetic field are characterized by real valued quantities. The same can be said in relativity theory, where complex numbers are sometimes useful as a tool to express Lorentz transformations as generalized rotations in 4-dimensional space-time. 

The situation is seemingly different in quantum theory, where the quantum state itself is complex valued. 
In fact, the first postulate of quantum theory already associates to a physical system a complex Hilbert space $\C^d$ of dimension $d$. But is this 
{mathematical structure} needed to describe our experiments? Or would real Hilbert spaces, 
leading to a real-valued description of quantum states, do?

In fact, to describe a quantum experiment with a real Hilbert space, it suffices to add an auxiliary 2-dimensional subsystem, a real-qubit (rebit)~\cite{stueckelberg}. This 
also works when applied to composite systems. However, one encounters a problem when attempting a physical interpretation of this structure -- it is not clear to which subsystem the auxiliary \reqbit\, should be attached. 
This is particularly relevant when the subsystems are at a distance from each other, but should all be able to access this auxiliary \reqbit. {Note that the issue comes up because in Real Quantum Theory, defined in this work like in the previous literature~\cite{real_quantum1,real_quantum2, mckague2009simulating, Chiribella2010, Hardy2011, Barnum2020}, -- just like in Complex Quantum Theory -- composition of independent systems is captured by taking the %usual 
{standard} tensor product of the individual independent states and measurements. This has to be satisfied by all involved systems, including states of and measurements on the auxiliary system.}
{Note that Real and Complex Quantum Theory are thus different theories. This is contrasted by the works~\cite{hita2025quantum, hoffreumon2025quantum}, postdating ours, which both change the way combined systems are described in order to reformulate Complex Quantum Theory in terms of real
Hilbert spaces (see Note added for details).}

The solution {to this problem} has been found over a decade ago \cite{mckague2009simulating}: this \reqbit \ can be delocalized among auxiliary qubits attached to each subsystem, in the form of a carefully encoded logical system (see the details below). 
For years this was considered to settle our questions --  the structure of {Complex Quantum Theory} is not needed (even though convenient mathematically), instead {Real Quantum Theory describes our experimental findings accurately too.}

However, recently it was realized that the above construction is incompatible with the assumptions made in network scenarios. More precisely, in any experiment where several \emph{independent} sources prepare different subsystems, admitting the presence of an entangled state shared among all subsystems is unsatisfactory. To highlight this insight, \cite{renou2021quantum} proposed a simple experiment with two sources and three observers, leading to correlations that admit no description by {Real Quantum Theory}  consistent with their assumption of source independence). The assumption that the sources do not share any entanglement is central to this conclusion, without it, {Real Quantum Theory} could not be ruled out. But what if the sources are only partially independent? This is what we investigate in this article.

We quantitatively bound the amount of quantum correlations that is needed {according to Real Quantum Theory} 
to reproduce the experimental results predicted by quantum theory in the setup proposed in~\cite{renou2021quantum}. We show that if the implementation achieves the maximal Bell score (in Complex Quantum Theory), the quantum correlations necessary in Real Quantum Theory are also sufficient to reproduce any experiment in the same causal network. Finally, considering the non-ideal setting we relate this minimal amount of entanglement needed in the real quantum model to the Bell score observed in the experiment. 
\\

\section{{Complex and Real Quantum Theories}}
In non-relativistic quantum theory, the totality of degrees of freedom needed to describe a physical system $C_i$ is associated with a complex Hilbert space $\C^{d_i}$, that we take to be finite dimensional for the sake of simplicity. Hence in a setup with $n$ systems, their joint state is described by a normalized vector on the product of all these spaces
$   \ket{\Psi}_{\bm C} \in \C^{d_1}\otimes \dots \otimes\C^{d_n}, $
or a density operator $\rho_{\bm C}$ acting on this Hilbert space, where $\bm C = C_1 \dots C_n$ denotes the collection of all the systems. {In particular, a global state of systems prepared independently is given by $\ket{\Psi}_{\bm C} = \ket{\psi_1}_{C_1}\otimes\dots \otimes \ket{\psi_n}_{C_n}$.} Measurements and transformations are also described by the POVMs and maps, with respect to this global Hilbert space. Their action must respect the locality of the systems -- a joint operator can be essentially instantaneously performed on two systems only if they are localized in the same region of space. This is what we call Complex Quantum Theory.

Real Quantum Theory is defined here analogously to the above, with the important difference that the complex Hilbert spaces are replaced by real ones. {This definition is in line with the previous literature~\cite{real_quantum1,real_quantum2,mckague2009simulating, Chiribella2010, Hardy2011, Barnum2020}, but differs from that of the works~\cite{hita2025quantum,hoffreumon2025quantum} which postdate ours.} For clarity let us here denote the systems with $\bar R_i$, their global state is then an element of
$\ket{\Psi}_{\bar{\bm  R} } \in \R^{d_1' }\otimes \dots \otimes\R^{d_n'},$
{and independent systems are described by a product state.}
The dimensions of the real Hilbert spaces must of course be bigger to accommodate the description of the same degrees of freedom. 

In the following, we recall the thought experiment from~\cite{renou2021quantum} that shows that in a scenario with two independent sources no strategy in Real Quantum Theory can explain the predictions of Complex Quantum Theory.

\section{Thought experiment of \cite{renou2021quantum}}

We consider the setup of Figure~\ref{fig:bilocal_setup}.
\begin{figure}
   \centering   \includegraphics[width=\columnwidth]{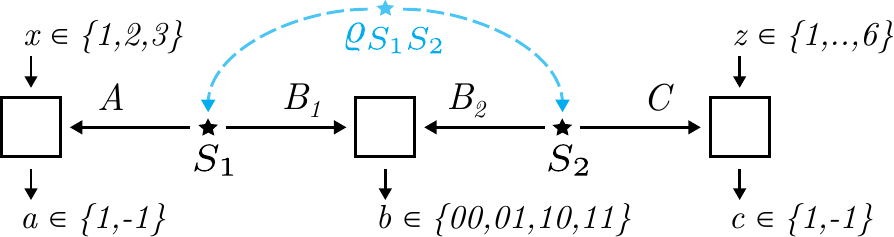}
   \caption{{\bf The setup.}
   There are three parties Alice, Bob and Charlie. Alice and Charlie choose inputs $x, z$ and obtain outcomes $a,c$ respectively. Bob only obtains an outcome $b$. In the original bilocalilty scenario (black), according to {(Real) Quantum Theory}, two independent sources $S_1$ and $S_2$ distribute (real) quantum states $\rho_{AB_1}$ and $\rho_{B_2 C}$ to Alice with Bob and Bob with Charlie, respectively. The parties then perform (real) POVMs $\{ A_{a}^x \}_a$, $\{ B^{b} \}_b$ and $\{ C_z^{c} \}_c$ on their respective subsystems, leading to the distribution $P(a,b,c|x,z)=  \tr\big( (A_x^{a} \otimes B^{b} \otimes C_z^{c})( \rho_{AB_1} \otimes \rho_{B_2 C} )\big)$. In addition, we here allow the sources to initially share some (real) quantum correlations represented by the (real) quantum state $\varrho_{S_1S_2}$ (cyan), such that the state $\rho_{AB_1B_2 C}$ distributed by the source must no longer be product, nor separable, with respect to the bipartition $AB_1|B_2C$. 
   }
   \label{fig:bilocal_setup}
   \end{figure}
In this setup, by measuring their subsystems the three parties Alice, Bob and Charlie generate measurement outcomes following a distribution $P(a,b,c|x,z)$. Now let us consider the linear functional (Bell test\footnote{Notice that  $\mathcal{B}_{00}(P)$ is the  Bell expression from~\cite{pauliselftest} (differently normalized), that is used there to self-test the Pauli measurements. })
\begin{align}
\label{violation}
\mathcal{B}(P)&=\sum_{b\in\{0,1\}^2}\mathcal{B}_b(P)
\qquad \text{with} \\
	\mathcal{B}_b(P)&= (-1)^{b_2}(\cS_{11}^b+\cS^b_{12})+(-1)^{b_1}(\cS^b_{21}-\cS^b_{22})\nonumber\\
	&+(-1)^{b_2}(\cS^b_{13}+\cS^b_{14})-(-1)^{b_1+b_2}(\cS^b_{33}-\cS^b_{34})\nonumber\\
	&+ (-1)^{b_1}(\cS^b_{25}+\cS^b_{26})-(-1)^{b_1+b_2}(\cS^b_{35}-\cS^b_{36}),
	\label{adjusted_3CHSH}
\end{align}
$\cS^b_{xz}=\sum_{a,c=-1,1}P(a,b,c|x,z)ac$ and $b=b_1b_2$. If Alice, Bob and Charlie share an arbitrary tripartite state, the maximal quantum (real or complex) value of $\mathcal{B}(P)$ is given by $\mathcal{B}_{\C}^{\rm sup} := 6 \sqrt{2}$. This can be seen from the fact that Bob's measurement is fixed and can be performed at the source, then,  for each outcome $b$, $\mathcal{B}_b(P)$ is essentially a sum of three CHSH tests between Alice and Charlie, whose individual maximal quantum score is $2\sqrt{2}$.

It was shown in~\cite{renou2021quantum} that when the sources $S_1$ and $S_2$ are independent, the same maximal quantum value $\mathcal{B}_{\C}^{\rm sup}\approx 8.49$ is still achievable in complex quantum theory, while, the optimal real value is bounded by 
\begin{equation}\label{eq: real bound}
     \mathcal{B}_{\R}^{\rm sup} \leq \mathcal{B}_{\R}^{\rm ub} := 7.66 < \mathcal{B}_{\C}^{\rm sup}.
\end{equation} 
Hence the real quantum bound can be violated experimentally, showing that no {model in Real Quantum Theory} with independent sources can explain the observations.  
Importantly, ref.~\cite{renou2021quantum} goes two steps further in the analysis of the thought experiment. 

First, it was shown that in Real Quantum Theory, the maximal score of the test $\mathcal{B}(P)=6 \sqrt{2}$ self-tests the following state shared by Alice and Charlie (after Bob's system is discarded) 
$\bar{\varrho}_{AC} = \frac{1}{2}\left( \Phi^-  + \Psi^+\right)$ with $\ket{\Phi^-} = \frac{1}{\sqrt 2} (\ket{00}-\ket{11}) $ and $\ket{\Psi^+} = \frac{1}{\sqrt 2} (\ket{01}+\ket{10})$, which is entangled in Real Quantum Theory.

Second, it was shown that the real quantum bound~\eqref{eq: real bound} remains valid if the sources are given access to unbounded shared randomness. Hence Real Quantum Theory cannot be rescued by allowing the sources to be classically correlated, i.e. by allowing them to initially share any auxiliary \emph{separable} state $\varrho_{S_1S_2}$ in Fig.~\ref{fig:bilocal_setup}. 

 In the following sections, we ask what can still be concluded if the assumption of source independence is relaxed in a general way, by also allowing the systems sent out by the sources to be partially entangled.

\section{Relaxing the independence assumption}

Consider two quantum systems $S_1$ and $S_2$ prepared in some state $\varrho_{S_1S_2}$, and let ${\tt All}$ denote the set containing all their possible states (in Complex or Real Quantum Theory). The systems are independent if their state is product $\varrho_{S_1S_2} = \varrho_{S_1} \otimes \varrho_{S_2},$ with  $\tt Ind$ denoting the associated set. In quantum theory, independence can be broken in (at least) two ways -- with classical or quantum correlations. We say that two systems are classically correlated if their state is separable  $\varrho_{S_1 S_2} = \sum_i \lambda_i \varrho_{S_1}^i \otimes \varrho_{S_2}^i,$
for $\lambda_i \, \geq 0$ and $\sum_i \lambda_i=1$, and we denote the set of separable states by $\tt Sep$, with $\tt Ind \subsetneq Sep \subsetneq All$. In the latter case of quantum correlations, $\varrho_{S_1 S_2}$ is entangled ($\tt All \setminus Sep$). 

 Recall that in the thought experiment of~\cite{renou2021quantum} independence is not required, instead arbitrary classical correlations of the sources are allowed. Note however, that both in Complex and Real Quantum Theory entangled states can be arbitrarily close to $\tt Ind$, hence proving that an experiment does not admit an interpretation with separable states does not put a lower bound on the (quantum) correlations of the sources. For the following discussion it is thus useful to quantify the amount of correlations and entanglement shared by two systems. In both cases, this can be done by the following geometric measure (see Appendix~\ref{app: ind}) given by the distance to the corresponding set 
\begin{align}
D_{\tt Sep(Ind)} (\varrho_{S_1S_2} ) &:= \inf_{\sigma \in \tt Sep(Ind)} \frac{1}{2}\|\rho-\sigma\|,
\end{align}
with $\|A\|= \tr \sqrt{A^\dag A}$ and the immediate property $D_{\tt Sep} (\varrho_{S_1S_2} ) \leq D_{\tt Ind} (\varrho_{S_1S_2}  )  \leq 1$.  Another natural measure of  entanglement that we consider, is the so-called entanglement of formation $E_{F}(\varrho_{S_1S_2})$~\cite{bennett1996mixed, caves2001entanglement}, giving the minimal entanglement cost of preparing the state. It is worth emphasizing that all the definitions above are meaningful {both in Complex and Real Quantum Theory.}

\subsection{Necessary and sufficient source correlations in Real Quantum Theory (ideal case)}
Recall that in the ideal realization with $\mathcal{B}(P)=6 \sqrt{2}$, the final state of Alice and Charlie in {Real Quantum Theory} (after discarding Bob) can be self-tested to  $\bar{\varrho}_{AC} = \frac{1}{2}\left( \Phi^-  + \Psi^+\right)$. Operationally this means that this state can be extracted from the state $\tr_{B_1B_2} \rho_{AB_1B_2C}$ by means of real local operations (RLO) performed by Alice and Charlie \cite{renou2021quantum}. 
In turn, for our setup in Fig.~\ref{fig:bilocal_setup} it follows that the same state can be locally extracted from the quantum correlations $\varrho_{S_1S_2}$ initially shared by the sources 
\begin{equation}\label{eq:rhos1s2}
   \mathcal{B}(P)=6 \sqrt{2}:\,\, \varrho_{S_1 S_2}  
   \xrightarrow[\text{RLO}]{} 
    \bar{\varrho}_{S_1 S_2} = \frac{1}{2}(\Phi^-  + \Psi^+).
\end{equation}
It is worth noting that the self-testing can be performed sequentially and without assuming that the same state is measured in each experimental round~\cite{Bancal2021selftestingfinite}. In this case one bounds the average fidelity of the extracted states with $\bar{\varrho}_{S_1 S_2}$. For the ideal realization this fidelity can be arbitrary close to one in the limit of large statistics, essentially guaranteeing that a copy of $\bar{\varrho}_{S_1 S_2}$ could have been locally extracted in each round.\footnote{Note that while~\cite{Bancal2021selftestingfinite} focuses on the CHSH test, the general analysis of sequential self-testing presented there is not specific to a particular Bell test.}

In Appendix~\ref{app: ind} we show that in {Real Quantum Theory} this state satisfies $D_{\tt Ind}(\bar{\varrho}_{S_1 S_2})= D_{\tt Sep}(\bar{\varrho}_{S_1 S_2})=\frac{1}{2}$ and $E_F(\bar{\varrho}_{S_1 S_2} )=1$, where both entanglement measures attain their maximal values (on two-\reqbit \ states). Recall that these are measures, non-increasing under real local operations, hence the values also lower bound entanglement and correlations of any state $\varrho_{S_1S_2}$ initially shared by the sources. In fact, as we discuss later, using  the simulation strategy of Ref.~\cite{mckague2009simulating} one can show that these correlations are also sufficient to simulate any quantum experiment in the network of Fig.~\ref{fig:bilocal_setup} {in Real Quantum Theory}. 

\subsection{Necessary source correlations in Real Quantum Theory (general case)}
In an actual experiment we don't expect to observe the maximal value $\mathcal{B}_{\C} = 6  \sqrt{2}$. Thus we study how the entanglement of the state shared by the sources relates to the maximal value $\mathcal{B}(P)$ achievable {in Real Quantum Theory}. 

Let there be a real quantum state $\tau_{AB_1B_2C}$ such that $D_{\tt Sep} (\tau_{AC} ) \leq \epsilon$ for $\tau_{AC} =\tr_{B_1B_2} (\tau_{AB_1B_2C})$, then the value $\mathcal{B}(P)$ of the Bell test in Eq.~\eqref{violation} for this state is bounded by the following hierarchy of semidefinite programs (SDPs):
\begin{align}
	\max_{ \substack{\{ \Gamma^n(\tau^b)\}_b, \Gamma^n(\sigma),\\ \Gamma^n(M), \Gamma^n(N),  {P}}} \, & \mathcal{B}(P)
    \label{eq:moment_problem}
    \\
	\text{s.t.} \, &
    {\small \begin{pmatrix}
		\Gamma^n(M) & \Gamma^n(\tau) -\Gamma^n(\sigma)\\
		\Gamma^n(\tau)^T -\Gamma^n(\sigma)^T & \Gamma^n(N)
	\end{pmatrix}} \geq 0 \nonumber \\
    &\Gamma^n(\sigma) \geq 0, \quad  \Gamma^n(\tau^b) \geq 0 \quad \forall b \ \nonumber\\
	& \Gamma^n(M)_{(1,1)} +\Gamma^n(N)_{(1,1)} \leq  4 \epsilon 
    \nonumber\\
	&\Gamma^n(\tau)_{(1,1)}=\Gamma^n(\sigma)_{(1,1)} = 1 \nonumber\\
	&\Gamma^n(\sigma)^{T_A} = \Gamma^n(\sigma) \nonumber \\
    &{\rm \bf NS}(P) \nonumber
\end{align}
where $\Gamma^n(\tau):=\sum_b\Gamma^n(\tau^b)$, $\tau:= \sum_b \tau^b$  and $\tau^b=\tr_B((\id_A \otimes B^b \otimes \id_C) \tau_{AB_1B_2C})$, $n$ is the level of the hierarchy where $\Gamma^n$ denotes moment matrices up to degree $n$ in the measurement operators of Alice and Charlie. The subscript $_{(i,j)}$ denotes the element at position $(i,j)$ in a matrix and the transposition and partial transposition of system $A$ are denoted $^{T}$ and $^{T_A}$ respectively. Notice that the elements of $P$ are elements of the moment matrices $\Gamma^n(\tau^b)$, so that the objective function is indeed linear in these.

This hierarchy is an extension of the one developed in~\cite{renou2021quantum}, where here we are able to incorporate trace distance constraints into the moment problem. For a derivation 
we refer to Appendix~\ref{app:hierarchy}. 
At level 2, the above hierarchy leads to the upper-bounds on the Bell score displayed in Table~\ref{tab:ep}.
\begin{table}
\centering   
 	\begin{tabular}{|c | c c c c c c c c|} 
 		\hline
        $\mathcal{B}(P)$ & 7.66 & 7.72& 7.78 & 7.88 & 8.06 & 8.22 & 8.37 & 8.50\\
        \hline
        $\epsilon_1\,  [\%]$ & 0& 2.5 & 5 & 10 & 20 & 30 & 40& 50\\
        \hline
         $\epsilon_2\,  [\%]$ & 0& 0.4 & 0.7 & 1.3 & 2.4 & 3.3  & 4.2 & 4.9 \\
 		\hline
 	\end{tabular}
     \caption{{\bf Upper bounds on Bell value for all states with $D_{\tt Sep} (\tau_{AC} ) \leq \epsilon$ in Real Quantum Theory.} Here, $\epsilon_1$ corresponds to the bounds from the moment problem \eqref{eq:moment_problem}, $\epsilon_2$  to the simpler linear bound \eqref{eq:bound2}. This table illustrates that the bounds based on the moment problem considerably outperform the simpler bounds. Notice also that for $\epsilon_1=0.5$ -- the distance to separable of the optimal complex strategy -- we obtain a value of $8.4953$ which coincides with $6\sqrt{2}$ (up to numerical precision), as expected.}
     \label{tab:ep}
\end{table}
These results have been obtained adapting the codes from~\cite{renou2021quantum} to incorporate the additional constraints, the codes are available at~\cite{github_url}.  They were run in MATLAB using the optimisation package YALMIP~\cite{yalmip} and the SDP solver Mosek~\cite{mosek}. Our implementation uses up to about 500 GB of RAM  and thus had to be run on a high performance computing cluster (with a runtime between 5 and 6 days per value). We remark that other solvers like SCS~\cite{scs} are able to solve the SDP on a desktop computer (in around 1-2 hours) but with less accuracy.

In Table~\ref{tab:ep} we compare our results with the linear bound
\begin{equation}\label{eq:bound2}
    D_{\tt Sep}(\tau_{A C}) \geq \frac{\cB(P) - \cB_{\tt Set}^{\rm sup}}{ \cB_{\tt All}^{\rm sup} - \cB_{\tt All}^{\rm inf}},
\end{equation}
derived in Appendix~\ref{app:linear_bounds} by bounding the norm of the Bell operator associated to $\mathcal{B}(P)$, where $\cB_\mathds{\tt Set}^{\rm sup(inf)}$ denotes the maximal (minimal) Bell score achieved by quantum states from $\tt Set$. In our case, for {Real Quantum Theory} we have $\cB_{\tt Sep}^{\rm sup}\leq \mathcal{B}_{\R}^{\rm ub}$ and $\cB_{\tt All}^{\rm sup} =-\cB_{\tt All}^{\rm inf} =6 \sqrt 2$, implying  $\epsilon \geq \frac{\cB( {P} )-7.66}{12 \sqrt 2}.$

Observing a Bell score of $ 7.66 < \mathcal{B}(P)^* < 6  \sqrt{2}$ we can use the above results to conclude how far from separable a real quantum state $\tau_{A C}$ needs to be. Coming back to the setup of Fig.~\ref{fig:bilocal_setup}, it follows that the same bound $D_{\tt Sep}(\varrho_{S_1S_2})\geq \epsilon$ must also hold for the state shared by the sources, since $D_{\tt Sep}$ is non-increasing under local operations.

\section{Discussion}
With our results we are thus able to relax the source independence assumption as well as to show tightness of our bounds in the ideal case. Here, we elaborate more on the resources needed to overcome the discrepancy between Real and Complex Quantum Theory and comment on the role of the assumption itself.

\subsection{Sufficient source correlations in Real Quantum Theory}
\label{sec:suff}
We have shown that if one assumes that the initial entanglement shared between the sources in the setup of Fig.~\ref{fig:bilocal_setup} is bounded, the observed correlations can rule out any description of the experiment {in Real Quantum Theory}. In turn, it is natural to ask what is the minimal entanglement resource sufficient in Real Quantum Theory to simulate the experiment. 

\begin{proposition*} \label{prop}
Any quantum experiment involving $m$ independent sources $S_i$ can be simulated with a real quantum model, if the sources initially share one copy of the $m$ rebit state\footnote{Notice that this state is real despite it being conveniently represented using the complex unit.}
\begin{equation}\label{eq: state REf}
    \bar \varrho_{\bm S}^{(m)} = \frac{1}{2}\left(\bigotimes_{i=1}^m \ketbra{y_+}_{S_i} +\bigotimes_{i=1}^m \ketbra{y_-}_{S_i}\right).
\end{equation}
where  $\bm S = S_1 \dots S_m$ and  $\ket{y_\pm} = (\ket{0}\pm \ii\ket{1})/\sqrt 2$.
\end{proposition*}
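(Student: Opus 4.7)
The plan is to extend the McKague simulation~\cite{mckague2009simulating} to $m$ independent sources by showing that the $m$ classically-correlated rebits of $\bar\varrho_{\bm S}^{(m)}$ collectively realise a single global imaginary unit while remaining a genuinely real quantum state. The core building block is the real encoding of a single complex operator: the real antisymmetric matrix $J \equiv \ii Y = \bigl( \begin{smallmatrix} 0 & 1 \\ -1 & 0 \end{smallmatrix} \bigr)$ on a rebit satisfies $J^2=-\opone$, so any complex operator $O = O_R + \ii O_I$ lifts to the real operator $\hat O = O_R \otimes \opone + O_I \otimes J$ on the doubled Hilbert space. A direct computation shows that $\hat O$ is real orthogonal (resp.\ real symmetric and positive) whenever $O$ is unitary (resp.\ Hermitian and positive), so the complex state-preparation unitary $V_i$ of each source $S_i$ and the complex POVM $\{M_{k,a}\}_a$ of each party $k$ translate into legitimate real operations $\hat V_i$ and $\{\hat M_{k,a}\}_a$.

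The simulation then proceeds as follows. Each source $S_i$ applies $\hat V_i$ to a local system register initialised to $\ket{0}$, together with its local rebit $a_i$ coming from $\bar\varrho_{\bm S}^{(m)}$; it then sends the system subsystems -- and its rebit $a_i$, routed to whichever downstream party will need it -- to the parties exactly as in the original complex experiment. Each party $k$ finally applies the encoded real POVM $\{\hat M_{k,a}\}_a$ on their subsystems together with any single rebit in their possession.

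Correctness follows from a branch-by-branch analysis. Using the decomposition $\bar\varrho_{\bm S}^{(m)} = \tfrac12 \bigotimes_{i=1}^m \ketbra{y_+}_{S_i} + \tfrac12 \bigotimes_{i=1}^m \ketbra{y_-}_{S_i}$ and the fact that $Y \ket{y_\pm} = \pm \ket{y_\pm}$, the operator $\ii Y$ acts as multiplication by $+\ii$ on every rebit in the $+$-branch and by $-\ii$ in the $-$-branch. Hence every encoded real operator $\hat O$ restricts, within a given branch, to the original complex operator $O$ (on the $+$-branch) or to its complex conjugate $O^\ast$ (on the $-$-branch). The two branches thus respectively reproduce the target complex quantum probability $p$ and its complex conjugate $p^\ast$; since probabilities are real, both values coincide, and the classical mixture implemented by $\bar\varrho_{\bm S}^{(m)}$ reproduces the complex quantum prediction exactly.

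The main technical point, I expect, is simply the bookkeeping: one needs to verify that the ``sign'' of the imaginary unit is the same across the entire network within each branch. This is precisely guaranteed by the perfect correlation of the $m$ rebits of $\bar\varrho_{\bm S}^{(m)}$ in the $Y$-basis, so that whichever rebit a given party uses, it realises the same $Y$-eigenvalue as any rebit used elsewhere in that branch. The only side condition is that every party performing a non-real measurement must receive at least one rebit from the sources it is connected to, which can always be arranged by routing $a_i$ alongside the subsystems that source $S_i$ distributes.
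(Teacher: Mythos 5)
Your encoding $\hat O = O_R\otimes\opone + O_I\otimes J$ and the branch-by-branch correctness argument (the $+$ and $-$ branches reproduce $p$ and $p^\ast$, which coincide because probabilities are real) are sound, and they capture the same mechanism that underlies the paper's construction. The genuine gap is in the distribution of the rebits. You assume every party that performs a non-real measurement can be served by ``routing'' one of the $m$ rebits of $\bar\varrho^{(m)}_{\bm S}$ to it, and assert this ``can always be arranged''. It cannot: each source holds a single rebit but is in general connected to several parties, and the number of parties typically exceeds the number of sources. Already in the paper's own two-source, three-party network, two rebits cannot be routed to three spatially separated parties. Nor can a source patch this by locally appending a fresh rebit: $\ket{y_+}$ is not a real state, and appending $\id/2$ destroys the global $Y$-basis correlation your branch argument relies on --- cross terms of the form $\ketbra{y_+}\otimes\ketbra{y_-}$ would make $\hat O$ act as $O$ at one site and as $O^\ast$ at another, breaking the simulation.

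The missing ingredient --- and the main technical point of the paper's proof beyond the construction of McKague et al. --- is that $\bar\varrho^{(m)}_{\bm S}$, although entangled in Real Quantum Theory, is \emph{locally broadcastable}: the real orthogonal map $U$ with $U\ket{y_\pm}\ket{0}=\ket{y_\pm}\ket{y_\pm}$ lets each source clone its rebit in the $Y$-eigenbasis, turning $\bar\varrho^{(m)}_{\bm S}$ into $\bar\varrho^{(m+n)}_{\bm S\bm L}$ with one correlated rebit per party, while leaving the sources' marginal unchanged (so the same initial copy serves every round). This step is exactly what would be forbidden by the no-local-broadcasting theorem in Complex Quantum Theory, and it is what makes the routing problem solvable in full generality. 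Your proof needs this lemma (or an equivalent) to establish the Proposition as stated; as written it only covers networks in which the $m$ rebits happen to suffice, e.g.\ because all but at most $m$ suitably placed parties perform real measurements.
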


The proof of this proposition extends the construction from~\cite{mckague2009simulating} to networks, using that real quantum states can be simultaneously entangled and locally broadcastable (see End Matter).

As notation suggests, in the case of two sources the state $\bar \varrho_{S_1S_2}^{(2)}$ in Eq.~\eqref{eq: state REf} is indeed equal to the state self-tested by the maximal Bell score $\mathcal{B}(P)$ in Eq~\eqref{eq:rhos1s2}. We can thus conclude that in order to reproduce the ideal implementation of the thought experiment~\cite{renou2021quantum}, the sources must pre-share entangled resources which are sufficient to reproduce any experiment in the same network.\\

\subsection{Real quantum physics with entangled space}
Following the real simulation strategy of Ref.~\cite{mckague2009simulating}, we consider each subsystem $\bar{R}_i$ of a global system $\bar{R}$  as composed of subsystems $\bar{R}_i=R_i L_i$, with a rebit $L_i$. The global system given by the collection of rebits ${\bm L} = L_{1}\dots L_n$ plays the role of a {\it real-quantum reference frame of complexness} (see End Matter for more details).  
To derive our bounds, we have so far understood the systems $L_i$ as describing certain local, initially uncorrelated, degrees of freedom. 
This is in line with the natural interpretation that views the systems $L_i$, alongside $R_i$, as describing  certain physical degrees of freedom, whose preparation is controlled by the sources present in the setup, adopted in the present work as well as in~\cite{renou2021quantum} .

One can however be more radical, and delegate this global delocalized reference frame system $\bm L$ to the structure of empty space itself, reminiscent of the entangled vacuum in quantum field theory~\cite{summers1985vacuum,reznik2003entanglement}. In this view every point of space $x$ has access to a local qubit $L_x$. These qubits are prepared in a global entangled state one time for all time (e.g.\ at the big bang), they can be locally broadcasted, and never need to be measured out. Such a `decorated' real quantum theory allows one to represent any state, transformation and measurement of Complex Quantum Theory, and can therefore not be ruled out from experimental observations. Nevertheless, here an ontological argument can be made in favor of Complex Quantum Theory -- it seems excessive to equip space with such an entangled global reference frame structure, 
when it can be avoided with a simple tweak of the number field.

\section{Conclusion}
Our results show that in the setup of Fig.~\ref{fig:bilocal_setup} a global state $ \bar \varrho_{S_1 S_2}$ with one rebit per source is necessary and sufficient for reproducing the predicted complex quantum correlations of~\cite{renou2021quantum} in Real Quantum Theory. Building on  \cite{mckague2009simulating}, we demonstrate that the sufficiency extends to any complex quantum correlations arising in experiments with $m$ independent sources. Finally, focusing on the tripartite Bell test introduced in~\cite{renou2021quantum}, we provide a tradeoff between source independence and Bell score in Real Quantum Theory.

In the past two years, several experiments were performed~\cite{exprealqm1,exprealqm2, exprealqm3} to rule out an explanation according to Real Quantum Theory in the specific scenario considered in~\cite{renou2021quantum}.\footnote{None of them achieves this in a loophole free manner.} The experimental implementations~\cite{exprealqm1} and~\cite{exprealqm3} report Bell values of $8.09(1)$ and $7.83(3)$, respectively. Our results give these findings a quantitative meaning, in the sense that they imply that to implement this in Real Quantum Theory one would need $D_{\tt Sep}({\varrho}_{S_1 S_2})$ above $0.2$ and $0.05$, respectively.\footnote{The experiment~\cite{exprealqm2} used a different Bell expression tailored to their setup, so that an assessment of the required $D_{\tt Sep}({\varrho}_{S_1 S_2})$ will require a separate analysis with the tools introduced in this work.} In fact, performing experiments with independent sources is challenging, especially for setups implemented on a single chip like in Ref.~\cite{exprealqm1}. Thus, analyzing the effects of partial source independence is an interesting research direction, independently of the specific question considered in this paper.

In summary, we demonstrated that Real Quantum Theory can only reproduce the quantum correlations observed in these experiments if some amount of entanglement, bounded from the correlations, is available to the sources before the experiment. 
Since assuming the presence of some hidden background entanglement seems ontologically excessive, concluding the inadequacy of Real Quantum Theory is predicated on guaranteeing sufficient partial independence of the sources.

\acknowledgements
 This work was supported by the Swiss National Science Foundation  via Ambizione PZ00P2\_208779, NCCR-SwissMap and by the Swiss Secretariat for Education,
Research and Innovation (SERI) under contract number UeM019-3.

\section*{End Matter}

{\it{Proof of Proposition}--} To see this we start by briefly recalling the real simulation strategy of Ref.~\cite{mckague2009simulating}. Here one sets $d'_i=2d_i$ (in $\ket{\Psi}_{\bar{\bm  R} }$above), such that $\R^{2 d_i }\simeq \R^{d_i}\otimes \R^{2}$ and each system can be decomposed into two subsystems $\bar{R}_i=R_i L_i$, with a rebit $L_i$. The global system given by the collection of rebit ${\bm L} = L_{1}\dots L_n$ plays the role of a {\it real-quantum reference frame of complexness}, it is encoded in the logical rebit subspace spanned by the entangled states $\ket{\mathcal R}_{\bm L}\propto \bigotimes_i \ket{y_+}_{L_i}+\bigotimes_i \ket{y_-}_{L_i}$ and  $\ket{\mathcal I}_{\bm L}\propto \ii \bigotimes_i \ket{y_+}_{L_i}- \ii \bigotimes_x \ket{y_-}_{L_i}$.  The crucial property of this encoding, is that the action of the real unitary operator $
\mathds{J}:=\ii\,  \sigma_Y=\ketbra{0}{1}-\ketbra{1}{0}$ on any of the rebits $L_i$ mimics the action of the imaginary unit on complex Hilbert spaces, i.e. $\mathds{J}_{L_i} \ket{\mathcal{R}}_{\bm L} = \ket{\mathcal{I}}_{\bm L}$ and $ \mathds{J}_{L_i} \ket{\mathcal{I}}_{\bm L} = -\ket{\mathcal{R}}_{\bm L}$. {Hence, in the logical subspace of the reference frame, identified by the projector $\id^{\Pi}_{\bm L}=  \ketbra{\mathcal{R}}{\mathcal{R}}_{\bm L}+\ketbra{\mathcal{I}}{\mathcal{I}}_{\bm L}$, we find $\id^{\Pi}_{\bm L} \mathds{J}_{L_i} \id^{\Pi}_{\bm L} =: \mathds{J}_{\bm L} = \ketbra{\mathcal{I}}{\mathcal{R}}_{\bm L}-\ketbra{\mathcal{R}}{\mathcal{I}}_{\bm L}$.}

The addition of such a delocalized reference frame system allows one to simulate any complex quantum model. In particular, any complex state $\ket{\Psi}_{\bm C}$ (see above) 
can be represented by $\ket{\Psi}_{\bm R\bm L} = \ket{\Psi^{\rm Re}}_{\bm R}\ket{\mathcal R}_{\bm L} + \ket{\Psi^{\rm Im}}_{\bm R}\ket{\mathcal I}_{\bm L}$. As already noted in \cite{mckague2009simulating} this state  is however not invariant under the global phase transformation of the complex state $e^{\ii \varphi} \ket{\Psi}_{\bm C}$, and, in fact, also not under discarding of unused reference frame qubits. An invariant real simulation model for states can be obtained by using the `dephased'  real density operator 
\begin{align}
    \varrho_{\bm R \bm L}\!&:=\!\frac{1}{2}(\Psi_{\bm R \bm L} + \mathds{J}_{\bm L} \Psi_{\bm R \bm L} \mathds{J}_{\bm L}^T) =\frac{1}{2}\left( \varrho^{\rm Re}_{\bm R} \otimes\id^{\Pi}_{\bm L}
    +\varrho^{\rm Im}_{\bm R} \otimes \mathds{J}_{\bm L}\right) 
    \label{eq:ref_state}
\end{align} 
representing the projector $\Psi_{\bm C} = \ketbra{\Psi}_{\bm C}$.
It is direct to verify that in this case, the marginal state of the reference frame is always given by
\begin{equation}
    \bar \varrho_{\bm L}^{(n)} =\tr_{\bm R}\,  \varrho_{\bm R\bm L} \, = \frac{1}{2} \id^{\Pi}_{\bm L},
    \label{eq:ref_state2}
\end{equation}
which is identical to Eq.~\eqref{eq: state REf} upon replacing $n$ with $m$ and $\bm L$ with $\bm S$.  For completeness, we discuss this simulation in full detail in Appendix~\ref{app: simulation}. There we also give the simulation strategy for an arbitrary measurement, and show that it can be done by only measuring the systems belonging to $\bm R$, i.e. without the need to ever consume and re-entangle the reference frame $\bm L$.

Next, let us consider a quantum network, where $m$  sources distribute quantum systems to $n$ parties $A_i$. The simulation model guarantees that providing the sources and parties with an entangled reference frame in the initial state $\bar{\varrho}^{(n+m)}_{\bm S \bm L}$ yields a universal real quantum simulation model. To proceed further with the proof, we demonstrate in Appendix~\ref{app: simulation} that in {Real Quantum Theory} any state  $\bar \varrho_{\bm L}^{(n)}$ exhibits two seemingly contradictory properties. It contains one ebit of (bound) entanglement  of formation across any bipartition, while at the same time being \emph{locally broadcastable}\footnote{The two properties are indeed contradictory in complex quantum theory according to the no-local-broadcasting theorem \cite{Luoon,Pianino}, displaying another operational difference between the two theories.}. In particular, with local real operations the sources can transform $\bar \varrho_{\bm S}^{(m)}$ into $\bar \varrho_{\bm S \bm L}^{(m+n)}$, with each system $L_i$ prepared by one of the sources connected to it. Using Eq.~\eqref{eq: state REf} one can easily verify that this global state indeed satisfies $\varrho_{\bm S }^{(n)}=\tr_{\bm L} \bar \varrho_{\bm S \bm L}^{(n+m)}$ and $\varrho_{\bm L }^{(m)}=\tr_{\bm S} \bar \varrho_{\bm S \bm L}^{(n+m)}$. Hence, given the initial state $\bar \varrho_{\bm S}^{(m)}$, the sources can prepare and distribute the reference frame $\bm L$ (alongside other degrees of freedom) to all parties involved in the experiment. After broadcasing, the marginal state of the sources $\bar \varrho_{\bm S}^{(m)}$ remains unchanged, and can be reused in subsequent rounds of the experiment (without correlating the rounds).

Finally, note that this construction is not specific to networks as in Fig.~\ref{fig:bilocal_setup}, but can be repeated for any setup involving $m$ independent sources. 

\bigskip

{{\it{Note added}--} Following the publication of this letter appeared two papers with a seemingly contradicting conclusion -- that quantum theory with real Hilbert spaces is on par with the usual quantum theory~\cite{hita2025quantum,hoffreumon2025quantum}. In fact, the constructions presented in both papers are formally similar to the general real simulation strategy underlying our Proposition. The key difference is the interpretation of this construction. While we see this  either as coming from a relaxation of the independence assumption in Real Quantum Theory, or as coming from a 'decorated' version of Real Quantum Theory with a shared reference frame, the authors of~\cite{hita2025quantum,hoffreumon2025quantum} give it different interpretations. In \cite{hita2025quantum}, the authors modify the usual definition of states (for the reference frame system, i.e. flag qubits in their terminology), such that the state $\bar \varrho_{\bm L}^{(n)}$ appears product and does not violate their notion of source independence. 
In \cite{hoffreumon2025quantum}, the authors directly impose the state structure of the form given in the rhs of Eq.~\eqref{eq:ref_state} $\Gamma[\rho_{\bm C}]=\left(\begin{array}{cc}
\rho_{\bm R}^{\rm Re} & -\rho_{\bm R}^{\rm Im}\\
\rho_{\bm R}^{\rm Im} & \rho_{\bm R}^{\rm Re}
\end{array}\right)$, and modify the standard tensor product composition rule towards one tailored to mimic Complex Quantum theory $\Gamma[\rho_{\bm C}]\odot\Gamma[\sigma_{\bm C'}]=\id \otimes (\rho_{\bm R}^{\rm Re}\otimes \sigma_{\bm R'}^{\rm Re}-\rho_{\bm R}^{\rm Im}\otimes \sigma_{\bm R'}^{\rm Im}) +\mathds{J} \otimes (\rho_{\bm R}^{\rm Re}\otimes \sigma_{\bm R'}^{\rm Im}+\rho_{\bm R}^{\rm Im}\otimes \sigma_{\bm R'}^{\rm Re})$. 
    
    In both cases the authors construct new versions of quantum theory over real Hilbert spaces that embed Complex Quantum Theory. In this sense, those works differ from ours and from~\cite{renou2021quantum} in that they focus on reformulations of Complex Quantum Theory, while we are focusing on the (experimental) distinction between Real and Complex Quantum Theories.}

\bibliographystyle{unsrt}
\bibliography{RQM.bib}

@article{hoffreumon2025quantum,
  title={Quantum theory does not need complex numbers},
  author={Hoffreumon, Timothee and Woods, Mischa P},
  journal={arXiv preprint arXiv:2504.02808},
  year={2025},
url ={https://arxiv.org/abs/2504.02808}
}

@article{hita2025quantum,
  title={Quantum mechanics based on real numbers: A consistent description},
  author={Hita, Pedro Barrios and Trushechkin, Anton and Kampermann, Hermann and Epping, Michael and Bru{\ss}, Dagmar},
  journal={arXiv preprint arXiv:2503.17307},
  year={2025},
url={https://arxiv.org/abs/2503.17307}
}

@misc{github_url,
 title={Git{H}ub repository for '{P}artial independence suffices to rule out {R}eal {Q}uantum {T}heory experimentally'},
author={Weilenmann, Mirjam and Gisin, Nicolas and Sekatski, Pavel},
howpublished={\url{https://github.com/MWeilenmann/relaxed-source-independence}}
}

@article{pauwels2024classification,
  title={Classification of joint quantum measurements based on entanglement cost of localization},
  author={Pauwels, Jef and Pozas-Kerstjens, Alejandro and Del Santo, Flavio and Gisin, Nicolas},
  journal={arXiv preprint arXiv:2408.00831},
  year={2024}
}

@article{eggeling2002semicausal,
  title={Semicausal operations are semilocalizable},
  author={Eggeling, Tilo and Schlingemann, Dirk and Werner, Reinhard F},
  journal={Europhysics Letters},
  volume={57},
  number={6},
  pages={782},
  year={2002},
  publisher={IOP Publishing}
}

@article{beckman2001causal,
  title={Causal and localizable quantum operations},
  author={Beckman, David and Gottesman, Daniel and Nielsen, Michael A and Preskill, John},
  journal={Physical Review A},
  volume={64},
  number={5},
  pages={052309},
  year={2001},
  publisher={APS}
}

@article{Bancal2021selftestingfinite,
  doi = {10.22331/q-2021-03-02-401},
  url = {https://doi.org/10.22331/q-2021-03-02-401},
  title = {Self-testing with finite statistics enabling the certification of a quantum network link},
  author = {Bancal, Jean-Daniel and Redeker, Kai and Sekatski, Pavel and Rosenfeld, Wenjamin and Sangouard, Nicolas},
  journal = {{Quantum}},
  issn = {2521-327X},
  publisher = {{Verein zur F{\"{o}}rderung des Open Access Publizierens in den Quantenwissenschaften}},
  volume = {5},
  pages = {401},
  month = mar,
  year = {2021}
}

@article{chitambar2019quantum,
  title={Quantum resource theories},
  author={Chitambar, Eric and Gour, Gilad},
  journal={Reviews of modern physics},
  volume={91},
  number={2},
  pages={025001},
  year={2019},
  publisher={APS}
}

@article{blecher2021real,
  title={Real operator algebras and real positive maps},
  author={Blecher, David P and Tepsan, Worawit},
  journal={Integral Equations and Operator Theory},
  volume={93},
  number={5},
  pages={49},
  year={2021},
  publisher={Springer}
}

@inproceedings{chiribella2023positive,
  title={Positive maps and entanglement in real Hilbert spaces},
  author={Chiribella, Giulio and Davidson, Kenneth R and Paulsen, Vern I and Rahaman, Mizanur},
  booktitle={Annales Henri Poincar{\'e}},
  volume={24},
  pages={4139--4168},
  year={2023},
  organization={Springer}
}

@article{chitambar2014everything,
  title={Everything you always wanted to know about LOCC (but were afraid to ask)},
  author={Chitambar, Eric and Leung, Debbie and Man{\v{c}}inska, Laura and Ozols, Maris and Winter, Andreas},
  journal={Communications in Mathematical Physics},
  volume={328},
  pages={303--326},
  year={2014},
  publisher={Springer}
}

@article{bennett1996mixed,
  title={Mixed-state entanglement and quantum error correction},
  author={Bennett, Charles H and DiVincenzo, David P and Smolin, John A and Wootters, William K},
  journal={Physical Review A},
  volume={54},
  number={5},
  pages={3824},
  year={1996},
  publisher={APS}
}

@article{caves2001entanglement,
  title={Entanglement of formation of an arbitrary state of two rebits},
  author={Caves, Carlton M and Fuchs, Christopher A and Rungta, Pranaw},
  journal={Foundations of Physics Letters},
  volume={14},
  pages={199--212},
  year={2001},
  publisher={Springer}
}

@article{reznik2003entanglement,
  title={Entanglement from the vacuum},
  author={Reznik, Benni},
  journal={Foundations of Physics},
  volume={33},
  pages={167--176},
  year={2003},
  publisher={Springer}
}

@article{summers1985vacuum,
  title={The vacuum violates Bell's inequalities},
  author={Summers, Stephen J and Werner, Reinhard},
  journal={Physics Letters A},
  volume={110},
  number={5},
  pages={257--259},
  year={1985},
  publisher={Elsevier}
}

@article{Pianino,
  title = {No-Local-Broadcasting Theorem for Multipartite Quantum Correlations},
  author = {Piani, Marco and Horodecki, Pawe{l} and Horodecki, Ryszard},
  journal = {Phys. Rev. Lett.},
  volume = {100},
  issue = {9},
  pages = {090502},
  numpages = {4},
  year = {2008},
  month = {Mar},
  publisher = {American Physical Society},
  doi = {10.1103/PhysRevLett.100.090502},
  url = {https://link.aps.org/doi/10.1103/PhysRevLett.100.090502}
}

@article{Luoon,
	abstract = {We address the issue of one-side local broadcasting for correlations in a quantum bipartite state, and conjecture that the correlations can be broadcast if and only if they are classical--quantum, or equivalently, the quantum discord, as defined by Ollivier and Zurek (Phys Rev Lett 88:017901, 2002), vanishes. We prove this conjecture when the reduced state is maximally mixed and further provide various plausible arguments supporting this conjecture. Moreover, we demonstrate that the conjecture implies the following two elegant and fundamental no-broadcasting theorems: (1) The original no-broadcasting theorem by Barnum et al. (Phys Rev Lett 76:2818, 1996), which states that a family of quantum states can be broadcast if and only if the quantum states commute. (2) The no-local-broadcasting theorem for quantum correlations by Piani et al. (Phys Rev Lett 100:090502, 2008), which states that the correlations in a single bipartite state can be locally broadcast if and only if they are classical. The results provide an informational interpretation for classical--quantum states from an operational perspective and shed new lights on the intrinsic relation between non-commutativity and quantumness.},
	author = {Luo, Shunlong},
	date = {2010/05/01},
	date-added = {2025-01-21 11:32:10 +0100},
	date-modified = {2025-01-21 11:32:10 +0100},
	doi = {10.1007/s11005-010-0389-1},
	id = {Luo2010},
	isbn = {1573-0530},
	journal = {Letters in Mathematical Physics},
	number = {2},
	pages = {143--153},
	title = {On Quantum No-Broadcasting},
	url = {https://doi.org/10.1007/s11005-010-0389-1},
	volume = {92},
	year = {2010},
	bdsk-url-1 = {https://doi.org/10.1007/s11005-010-0389-1}
}

@article{pauliselftest,
  title = {Self-testing of {P}auli observables for device-independent entanglement certification},
  author = {Bowles, Joseph and \ifmmode \check{S}\else \v{S}\fi{}upi\ifmmode \acute{c}\else \'{c}\fi{}, Ivan and Cavalcanti, Daniel and Ac\'{\i}n, Antonio},
  journal = {Phys. Rev. A},
  volume = {98},
  issue = {4},
  pages = {042336},
  numpages = {24},
  year = {2018},
  month = {Oct},
  publisher = {American Physical Society},
  doi = {10.1103/PhysRevA.98.042336},
  url = {https://link.aps.org/doi/10.1103/PhysRevA.98.042336}
}

@article{renou2021quantum,
  title={Quantum theory based on real numbers can be experimentally falsified},
  author={Renou, Marc-Olivier and Trillo, David and Weilenmann, Mirjam and Le, Thinh P and Tavakoli, Armin and Gisin, Nicolas and Ac{\'\i}n, Antonio and Navascu{\'e}s, Miguel},
  journal={Nature},
  volume={600},
  number={7890},
  pages={625--629},
  year={2021},
  publisher={Nature Publishing Group UK London}
}

@article{mckague2009simulating,
  title={Simulating quantum systems using real Hilbert spaces},
  author={McKague, Matthew and Mosca, Michele and Gisin, Nicolas},
  journal={Physical review letters},
  volume={102},
  number={2},
  pages={020505},
  year={2009},
  publisher={APS}
}

@article{blank1999imaginary,
  title={An imaginary tale book review},
  author={Blank, B},
  journal={Notices of the AMS},
  volume={46},
  number={10},
  pages={1233--1236},
  year={1999},
  href={http://www.cs.umd.edu/~gasarch/BLOGPAPERS/bookrevI.pdf}
}

@article{merino2006short,
  title={A short history of complex numbers},
  author={Merino, Orlando},
  journal={University of Rhode Island},
  year={2006},
 href ={https://www.math.uri.edu/~merino/spring06/mth562/ShortHistoryComplexNumbers2006.pdf}
}

@conference{yalmip, 
title={YALMIP : A Toolbox for Modeling and Optimization in MATLAB}, 
author={J. L\"{o}fberg}, 
booktitle={Proceedings of the CACSD Conference}, 
address={Taipei, Taiwan}, 
year={2004}
}

@manual{mosek,
   author = "MOSEK ApS",
   title = "The MOSEK optimization toolbox for MATLAB manual. Version 9.0.",
   year = 2019,
   url = "http://docs.mosek.com/9.0/toolbox/index.html"
 }

@misc{scs,
    author       = {Brendan O'Donoghue and Eric Chu and Neal Parikh and Stephen Boyd},
    title        = {{SCS}: Splitting Conic Solver, version 3.2.2},
    howpublished = {\url{https://github.com/cvxgrp/scs}},
    month        = nov,
    year         = 2022
}

@article{Moroder,
  title = {Device-Independent Entanglement Quantification and Related Applications},
  author = {Moroder, Tobias and Bancal, Jean-Daniel and Liang, Yeong-Cherng and Hofmann, Martin and G\"uhne, Otfried},
  journal = {Phys. Rev. Lett.},
  volume = {111},
  issue = {3},
  pages = {030501},
  numpages = {5},
  year = {2013},
  month = {Jul},
  publisher = {American Physical Society},
  doi = {10.1103/PhysRevLett.111.030501},
  url = {https://link.aps.org/doi/10.1103/PhysRevLett.111.030501}
}

@article{stueckelberg,
  title={Quantum theory in real {H}ilbert space},
  author={Stueckelberg, Ernst CG},
  journal={Helv. Phys. Acta},
  volume={33},
  number={727},
  pages={458},
  year={1960}
}

@article{exprealqm1,
  title = {Ruling Out Real-Valued Standard Formalism of Quantum Theory},
  author = {Chen, Ming-Cheng and Wang, Can and Liu, Feng-Ming and Wang, Jian-Wen and Ying, Chong and Shang, Zhong-Xia and Wu, Yulin and Gong, M. and Deng, H. and Liang, F.-T. and Zhang, Qiang and Peng, Cheng-Zhi and Zhu, Xiaobo and Cabello, Ad\'an and Lu, Chao-Yang and Pan, Jian-Wei},
  journal = {Phys. Rev. Lett.},
  volume = {128},
  issue = {4},
  pages = {040403},
  numpages = {5},
  year = {2022},
  month = {Jan},
  publisher = {American Physical Society},
  doi = {10.1103/PhysRevLett.128.040403},
  url = {https://link.aps.org/doi/10.1103/PhysRevLett.128.040403}
}

@article{exprealqm2,
  title = {Testing Real Quantum Theory in an Optical Quantum Network},
  author = {Li, Zheng-Da and Mao, Ya-Li and Weilenmann, Mirjam and Tavakoli, Armin and Chen, Hu and Feng, Lixin and Yang, Sheng-Jun and Renou, Marc-Olivier and Trillo, David and Le, Thinh P. and Gisin, Nicolas and Ac\'{\i}n, Antonio and Navascu\'es, Miguel and Wang, Zizhu and Fan, Jingyun},
  journal = {Phys. Rev. Lett.},
  volume = {128},
  issue = {4},
  pages = {040402},
  numpages = {6},
  year = {2022},
  month = {Jan},
  publisher = {American Physical Society},
  doi = {10.1103/PhysRevLett.128.040402},
  url = {https://link.aps.org/doi/10.1103/PhysRevLett.128.040402}
}

@article{exprealqm3,
  title = {Experimental Refutation of Real-Valued Quantum Mechanics under Strict Locality Conditions},
  author = {Wu, Dian and Jiang, Yang-Fan and Gu, Xue-Mei and Huang, Liang and Bai, Bing and Sun, Qi-Chao and Zhang, Xingjian and Gong, Si-Qiu and Mao, Yingqiu and Zhong, Han-Sen and Chen, Ming-Cheng and Zhang, Jun and Zhang, Qiang and Lu, Chao-Yang and Pan, Jian-Wei},
  journal = {Phys. Rev. Lett.},
  volume = {129},
  issue = {14},
  pages = {140401},
  numpages = {6},
  year = {2022},
  month = {Sep},
  publisher = {American Physical Society},
  doi = {10.1103/PhysRevLett.129.140401},
  url = {https://link.aps.org/doi/10.1103/PhysRevLett.129.140401}
}

@article{Hardy2011,
	author = {Lucien Hardy and William K. Wootters},
	doi = {10.1007/s10701-011-9616-6},
	journal = {Foundations of Physics},
	number = {3},
	pages = {454--473},
	publisher = {Springer US},
	title = {Limited Holism and Real-Vector-Space Quantum Theory},
	volume = {42},
	year = {2012}
}

@book{Watrous_Book,
  title={The theory of quantum information},
  author={Watrous, John},
  year={2018},
  publisher={Cambridge university press}
}

@inbook{real_quantum1,
  author    = "W. K. Wootters",
  title     = "Complexity, Entropy, and the Physics of Information",
  chapter   = "Local Accessibility of Quantum States",
  publisher = "Addison-Wesley",
  year      = "1990"
}

@article{real_quantum2,
author = {Caves,Carlton M.  and Fuchs, Christopher A.  and Schack, Rüdiger },
title = {Unknown quantum states: The quantum de Finetti representation},
journal = {Journal of Mathematical Physics},
volume = {43},
number = {9},
pages = {4537-4559},
year = {2002},
doi = {10.1063/1.1494475},
URL = {https://doi.org/10.1063/1.1494475},
eprint = {https://doi.org/10.1063/1.1494475}
}

@article{Chiribella2010,
  title = {Probabilistic theories with purification},
  author = {Chiribella, Giulio and D'Ariano, Giacomo Mauro and Perinotti, Paolo},
  journal = {Phys. Rev. A},
  volume = {81},
  issue = {6},
  pages = {062348},
  numpages = {40},
  year = {2010},
  month = {Jun},
  publisher = {American Physical Society},
  doi = {10.1103/PhysRevA.81.062348},
  url = {https://link.aps.org/doi/10.1103/PhysRevA.81.062348}
}

@article{Barnum2020,
   title={Composites and Categories of Euclidean Jordan Algebras},
   volume={4},
   ISSN={2521-327X},
   url={http://dx.doi.org/10.22331/q-2020-11-08-359},
   DOI={10.22331/q-2020-11-08-359},
   journal={Quantum},
   publisher={Verein zur Forderung des Open Access Publizierens in den Quantenwissenschaften},
   author={Barnum, Howard and Graydon, Matthew A. and Wilce, Alexander},
   year={2020},
   month=nov, pages={359} }

\begin{appendix}

\onecolumngrid
    
\section{Properties of independence and entanglement measures}

\label{app: ind}

\subsection{Monotonicity and convexity of trace distance in Real Quantum Theory}

It is well known that the trace distance $ D(\rho,\sigma) := \frac{1}{2}\|\rho -\sigma\|$, with trace norm $\|A\|= \tr \sqrt{A^\dag A}$,  
is monotonic, i.e. satisfies the data processing inequality
\begin{align}
D(\rho,\sigma) &\geq  D\big(\cE(\rho),\cE(\sigma)\big) 
\end{align}
for any CPTP map $\cE$.  The same is true within real quantum physics, as the proof goes through identically with real Hilbert spaces. In particular, a  textbook way to show monotonicity is to argue that the trace distance equals the `classical' total variation distance between probability distributions induced by the optimal measurement, hence it cannot be increased by $\cE$ which can always be seen as part of the measurement (its adjoint is unital, see discussion of CP maps over real Hilbert spaces in the next section). This can be shown in two steps. First the upper bound is established using the Cauchy-Schwartz inequality. Second, the measurement saturating the bound is found by diagonalizing $\rho -\sigma$. None of these steps is specific to complex Hilbert spaces. Note that here and below CP maps $\cE$ do not necessarily preserve the dimension of the quantum system on which they act, and may, for example, include tracing out  subsystems as a specific case.

In turn the joint convexity of the trace distance $ D(\sum_i  p_i \rho^i,\sum_i p_i \, \sigma^i) \leq  \sum p_i \,D(\rho^i, \sigma^i)$ directly follows from the triangle inequality for the trace norm (both over real and complex Hilbert spaces). 

\subsection{Local operations (with classical communication) and entanglement in Real Quantum Theory}

To have a proper discussion of correlations and entanglement as resources~\cite{chitambar2019quantum}, in addition to the corresponding sets of free states $\tt Ind$ and $\tt Sep$ in Real and Complex Quantum Theories
 one needs to introduce the associated sets of free operations. In the case of entanglement this is given by local operations and classical communication (LOCC)~\cite{chitambar2014everything}, while for  correlations it is natural to identify the set of free operations as all local operations (LO). These classes of transformation are well understood in the context of usual quantum theory. Nevertheless, one may wonder how much of our understanding pertains when considering Real Quantum Theory. In this respect, the following proposition is  reassuring.

\begin{proposition*}
\cite{blecher2021real}\cite{chiribella2023positive}
For any completely positive (CP) map $\cE_{\R}: L(\R^{d_i})\to L(\R^{d_o})$  its complexification $\cE_\C: L(\C^{d_i})\to L(\C^{d_o})$, defined by $\widetilde \cE_\C[X +\ii Y] :=  \cE_\R[X]  +\ii \, \cE_\R[Y]$ for any $X,Y\in L(\R^{d_i})$, is also CP.
\end{proposition*}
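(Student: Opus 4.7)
The plan is to establish a real Kraus decomposition for $\cE_\R$ and then read off the complexification from it. First I would set up the real-Hilbert-space analogue of the Choi--Jamio\l{}kowski isomorphism: with $\ket{\Omega} = \sum_i \ket{i}\ket{i} \in \R^{d_i}\otimes\R^{d_i}$, define the real Choi operator
\[ C_{\cE_\R} := (\cE_\R \otimes \id_{L(\R^{d_i})})[\ketbra{\Omega}{\Omega}] \in L(\R^{d_o}\otimes\R^{d_i}). \]
Since $\cE_\R$ is completely positive over real Hilbert spaces (meaning $\cE_\R \otimes \id_{L(\R^k)}$ is positivity-preserving for every $k$, in particular for $k = d_i$), this Choi operator is a real symmetric positive semidefinite matrix.

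Next I would invoke the real spectral theorem -- symmetric positive semidefinite matrices over $\R$ admit an orthonormal eigenbasis with real eigenvectors and nonnegative eigenvalues -- to write
\[ C_{\cE_\R} = \sum_j \ket{\psi_j}\bra{\psi_j}, \qquad \ket{\psi_j} \in \R^{d_o}\otimes\R^{d_i}. \]
Reshaping each $\ket{\psi_j}$ into a $d_o\times d_i$ real matrix $K_j$ then yields the real Kraus form
\[ \cE_\R[X] = \sum_j K_j X K_j^T \qquad \forall\, X \in L(\R^{d_i}), \]
by the same bookkeeping calculation that produces the usual complex Kraus representation from the complex Choi matrix.

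Finally, for any $Z = X + \ii Y$ with $X,Y \in L(\R^{d_i})$, the definition of $\widetilde\cE_\C$ together with linearity gives
\[ \widetilde\cE_\C[Z] = \cE_\R[X] + \ii\,\cE_\R[Y] = \sum_j K_j (X+\ii Y) K_j^T = \sum_j K_j Z K_j^\dagger, \]
where in the last step I used $K_j^T = K_j^\dagger$ because $K_j$ is real. This is a standard complex Kraus representation (with Kraus operators that happen to be real), which is the textbook sufficient condition for complete positivity over complex Hilbert spaces. The only non-routine step is the first one: ensuring that the real Choi operator admits a \emph{real} orthonormal eigendecomposition. This rests entirely on the symmetry of $C_{\cE_\R}$ over $\R$, so the obstacle is really just checking that nothing in the construction secretly requires the complex field; once that is in place, everything else collapses to the identity $K_j^T = K_j^\dagger$ and the standard Kraus-implies-CP direction.
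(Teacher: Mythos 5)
Your argument is correct. Note, however, that the paper does not actually prove this proposition: its ``proof'' is a one-line pointer to Corollary~4.2 of Chiribella et al.\ and Lemma~2.3 of Blecher et al., so you have supplied a self-contained proof where the paper defers entirely to the literature. Your route --- real Choi operator, real spectral theorem, real Kraus form $\cE_\R[X]=\sum_j K_j X K_j^T$, then the observation that the same real $K_j$ furnish a complex Kraus representation of $\widetilde\cE_\C$ --- is the standard real analogue of Choi's theorem and is essentially the content of the cited lemmas, so nothing is lost by writing it out. Two points are worth making explicit. First, the symmetry of $C_{\cE_\R}$ is not automatic from linearity alone: it follows because $\ketbra{\Omega}{\Omega}$ is itself a symmetric PSD element of $L(\R^{d_i}\otimes\R^{d_i})$ and because ``positive'' over $\R$ is taken to mean symmetric with nonnegative quadratic form (the convention the paper implicitly adopts); under that convention $d_i$-positivity of $\cE_\R$ directly yields a symmetric PSD Choi operator, as you say. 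Second, the reconstruction step should be stated as holding on \emph{all} of $L(\R^{d_i})$, not just on symmetric matrices: since $\cE_\R[\ketbra{i}{j}] = (\id\otimes\bra{i})\,C_{\cE_\R}\,(\id\otimes\ket{j})$ for every $i,j$, linearity extends $\sum_j K_j(\cdot)K_j^T$ to the full matrix algebra, which is exactly what you need in the final step where $X+\ii Y$ has an arbitrary (non-symmetric) real part and imaginary part. With those two remarks spelled out, the proof is complete and arguably a useful addition, since a reader of the paper currently has to chase the references to see why the statement holds.
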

\begin{proof}
    See corollary 4.2 in \cite{chiribella2023positive}, or lemma 2.3 in \cite{blecher2021real}.
\end{proof}
 The proposition guarantees that there are no CP maps $\cE_{\R}$ over real Hilbert space, which cease to be CP when viewed as maps over complex Hilbert spaces (i.e. when complexified)\footnote{As a side remark, note that the same statement is not true for positive maps, with counterexamples constructed in \cite{chiribella2023positive}.}. This is important because CP maps correspond to physically allowed transformations, in particular the most general LO is a quantum instrument represented by a collection of CP trace-non-increasing maps. Note also, that the somehow converse statement is trivial. More precisely, let us call a map $\cE_\C$ real if it satisfies $\cE_\C[X] = (\cE_\C[X])^*$ for all $X\in L(\R^{d_i})$. Then, any real CP map $\cE_\C$ induces a map $\cE_\R$ (with $\widetilde \cE_\C =  \cE_\C $) which is also CP (conditions for $\cE_\R$ being CP are also necessary for $\cE_\C$).  In summary, we can now identify the set of CP maps $\cE_\R$ over real Hilbert spaces with the subset of real CP maps $\cE_\C$ over corresponding complex Hilbert spaces. The same identification naturally translates to local operations (quantum instruments), but also to LOCC by considering sequences of LO interjected with rounds of classical communication.

Next, let us quickly verify that our free operations preserve the free sets of states both for Real and Complex Quantum Theory. LO manifestly preserve independence $(\cE_A\otimes \cE_B)[\varrho_A\otimes \varrho_B] = \cE_A[\varrho_A]\otimes \cE_B[\varrho_B]$, while a round of CC may break independence but can only establish classical correlations
$$ {\rm CC }: \quad \left(\sum_i p_i\, \varrho_A^i \otimes \ketbra{i}_{R_A}\right) \otimes \varrho_B \mapsto \sum_i p_i  \left (\varrho_A^i \otimes \ketbra{i}_{R_A}\right) \otimes \left (\varrho_B \otimes \ketbra{i}_{R_B} \right).$$

Finally, note that in Real Quantum Theory all  separable states have the property 
\begin{equation}
 \varrho_{AB} \in {\tt Sep} \implies \varrho_{AB}^{T_A} = \sum_i \lambda_i (\varrho_A^i)^T \otimes \varrho_B^i = \sum_i \lambda_i\,  \varrho_A^i \otimes \varrho_B^i = \varrho_{AB},
\end{equation}
that we use later.

\subsection{General properties of geometric measures}

Let $D(\rho,\sigma)$ be a monotonic divergence on quantum states, and $\tt Free\subset {\tt All}$ be the set of (free) states. Consider a CPTP map $\cE$ which describes a free transformation, i.e. such that
\begin{equation}
 \sigma \in {\tt Free} \implies \cE[\sigma] \in {\tt Free}
\end{equation}
where it is implicit that the density operators $\sigma$ must be chosen in the domain of $\cE$. For a state $\rho$ define the distance (more generarally divergence) to the free set as 
\begin{equation}
    D_{\tt Free}(\rho) := \inf_{\sigma \in \tt Free} D(\rho,\sigma),
\end{equation}
where it is implicit that $\sigma$ must be chosen in the domain of $D(\rho,\bullet)$, i.e. $\rho$ and $\sigma$ must describe states of the same quantum systems. By definition $D_{\tt Free}(\sigma)=0$ on free states. Let us also show that this quantity is non-increasing under free operations $\cE$. We have
\begin{align}
     D_{\tt Free}(\cE(\rho)) &= \inf_{\sigma' \in \tt Free} D(\cE(\rho),\sigma')
       \leq\inf_{\sigma \in \tt Free} D(\cE(\rho),\cE(\sigma))
      \leq\inf_{\sigma \in \tt Free} D(\rho,\sigma)
      = D_{\tt Free}(\rho).
\end{align}
where we use the fact that $\cE(\sigma)$ is free in the second inequality, and the monotonicity of $D$ in the third.\\

Now, let $D(\rho,\sigma)$ be the trace distance. It follows that $D_{\tt Sep}(\rho)$ ($D_{\tt Ind}(\rho)$) satisfy the two essential properties~\cite{chitambar2019quantum} for a measure of entanglement (correlations): they are zero for separable (product) states and non-increasing under LOCC (LO).

\subsection{Entanglement of formation}

For a pure bipartite state $\Psi_{AB}$ the entropy of entanglement is given by the  von Neumann entropy of the marginal states
$$
E_{H}(\Psi_{AB}) := H(\tr_A \Psi_{AB}) = H(\tr_B \Psi_{AB}).
$$
This definition can not be simply lifted to mixed states, as in this case marginal entropy does not need to come from entanglement. The standard way to proceed is thus to consider its convex roof, called entanglement of formation~\cite{bennett1996mixed},
\begin{align}
E_{F}(\varrho_{AB} ) := \inf_{\{p_k \Psi^{(k)}_{AB}\}} &\sum_k p_k\,E_{H}\left(\Psi_{AB}^{(k)}\right)  \\
\text{s.t.} &\sum_k p_k \, \Psi^{(k)}_{AB} = \varrho_{AB},
\end{align}
where $p_k >0 \ \forall k$ and all $\Psi^{(k)}_{AB}$ are pure states.

It is well known that $E_F(\rho_{AB})$ is an entanglement measure in complex quantum theory~\cite{bennett1996mixed}. Unsurprisingly this remains true in Real Quantum Theory, as we now argue. The fact that it is still zero on separable states is immediate. To see that it is non-increasing under LOCC one can follow the steps of the original proof~\cite{bennett1996mixed}, established in the context of complex quantum theory. The proof proceeds in two steps: first, showing that for any bipartite pure state, the entanglement of formation can not increase by LO applied by one party (follows from convexity of the von Neumann entropy); second, showing that this is still true for mixed states (essentially follows from convexity of $E_F$). Both remain true in Real Quantum Theory.

\subsection{Maximal distances to separable two qubit states}

Let us now find the maximal distance that a two (real) qubit states can have to the separable set. Let $\varrho_{AB}=\sum_i p_i \, \Psi_{AB}^i$ be such a state and its spectral decomposition, we have
\begin{align}
     D_{\tt Sep}(\varrho_{AB}) &= \inf_{\sigma_{AB}\in \tt Sep} D(\varrho_{AB} ,\sigma_{AB})  = \inf_{\sigma_{AB}^i 
    \in {\tt Sep}} D\left( \varrho_{AB}, \sum_i p_i \sigma_{AB}^i \right )\\
    &= \inf_{\sigma_{AB}^i 
    \in {\tt Sep}} D\left( \sum_i p_i \, \Psi_{AB}^i , \sum_i p_i \sigma_{AB}^i \right )\\
    & \leq \inf_{\sigma_{AB}^i 
    \in {\tt Sep}} \left( \sum_i p_i \, D(\Psi_{AB}^i , \sigma_{AB}^i ) \right)
\end{align}
where we used joint convexity. Now since the sum has at most 4 entries and all $\sigma_{AB}^i$ can be varied independently, we can permute the sum and the inf leading to 
\begin{align}
     D_{\tt Sep}(\varrho_{AB}) 
    & \leq  \sum_i p_i \, \inf_{\sigma_{AB}^i 
    \in {\tt Sep}}  D(\Psi_{AB}^i , \sigma_{AB}^i ).
\end{align}
It remains to upper bound the distance of an arbitrary pure state to the separable set, i.e. $ \inf_{ 
    \sigma_{AB}\in \tt Sep} D\left( \Psi_{AB},
    \sigma_{AB} \right)$.

To do so consider the Schmidt decomposition of the pure state $\Psi_{AB}$, in Complex and Real Quantum Theory, it guarantees that for some choice of local bases the state can be written as 
\begin{equation}
    \ket{\Psi}_{AB} = c \ket{00}_{AB} + s \ket{11}_{AB} 
\end{equation}
with real non-negative $ c, s$ satisfying $c^2+s^2=1$. For the separable state $ \sigma_{AB} = c^2 \ketbra{00} + s^2 \ketbra{11}$ we then find 
\begin{equation}
    D(\Psi_{AB}, \sigma_{AB}) = cs \frac{1}{2}\Big\| ( \ketbra{00}{11} +\ketbra{11}{00})\Big\|= c s \leq \frac{1}{2},
\end{equation}
and hence $D_{\tt Sep}(\varrho_{AB})\leq \frac{1}{2}$. The bound is saturated for Bell states, as follows from the next section.

\subsection{Distance to separable of $\bar{\rho}_{S_1 S_2}$}
For the state $\bar{\varrho}_{S_1 S_2}$ from (5) in the main text, the distance $D_{\tt Sep}(\bar{\varrho}_{S_1 S_2})$ to the real separable states can be lower bounded by the following SDP: 
\begin{align}
	\min_{ \sigma} \, & \frac{1}{4} \left(\tr(M)+\tr(N)\right)
    \\
	\text{s.t.} \, &
    {\begin{pmatrix}
	M & \bar{\varrho}_{S_1 S_2} -\sigma\\
	(\bar{\varrho}_{S_1 S_2}-\sigma)^T & N
\end{pmatrix}} \geq 0 \nonumber \\
    &\sigma \geq 0, \quad \tr(\sigma) =1  \nonumber\\
    &\sigma^{T_{S_1}}=\sigma.
\end{align}
The first two lines encode the trace distance (see e.g.~\cite{Watrous_Book}) where, since $\bar{\varrho}_{S_1 S_2},\sigma$ are real, we use the transpose $^T$ and $M$, $N$ are symmetric matrices. The third line encodes that $\sigma$ is a (real) quantum state and the fourth encodes that it is real separable (where $^{T_{S_1}}$ denotes partial transposition of system $S_1$)~\cite{caves2001entanglement}.\footnote{Notice that this condition is only known to be tight for two rebits~\cite{caves2001entanglement}, in general it is not.} 

Solving this SDP we obtain 
$$D_{\tt Sep}(\bar{\varrho}_{S_1 S_2}) = \frac{1}{2}.$$
Notice further that for the maximally mixed state, which is product, $\tilde{\sigma}=\frac{\id}{4}$ the distance is $\frac{1}{2} \|{\bar{\varrho}_{S_1 S_2} - \tilde{\sigma}}\| = \frac{1}{2}$, hence 
$$D_{\tt Sep}(\bar{\varrho}_{S_1 S_2})=D_{\tt Ind}(\bar{\varrho}_{S_1 S_2})=\frac{1}{2}.$$

Notice that the four Bell states have the same distance to separable, which can be verified by running the same SDP.

\subsection{Real entanglement of formation of the sate $\bar{\varrho}_{S_1 S_2}$}

\label{sec: EF of rho}

An closed-form expression of the real entanglement of formation of two \reqbit states $\varrho_{AC}$ has been derived in \cite{caves2001entanglement}
\begin{equation}
    E_F(\varrho_{AC}) = H\left( \frac{1+\sqrt{1-|\tr \varrho_{AC} (\sigma_Y\otimes \sigma_Y)_{AC}|^2}}{2} \right).
\end{equation}
A straightforward computation gives for our state of interest $\bar{\varrho}_{S_1 S_2} = \frac{1}{2}(\Phi^-  + \Psi^+)_{AC}$ 
    \begin{align}
        E_F(\bar{\varrho}_{S_1 S_2}) =1. 
    \end{align}
    Clearly this is also the maximal possible value, since  the entropy of entanglement is bounded by one for two \reqbit states.

\section{Semidefinite programming hierarchy for $\epsilon$ independence}
\label{app:hierarchy}

We aim to maximise the Bell value $\mathcal{B}(P)$ over all $P$ satisfying
$ P(a,b,c|x,z)=  \tr((A_x^a \otimes B^b \otimes C_z^c) \tau_{AB_1B_2C})$  for some real POVMs $A_x, B, C_z$ and a real state $\tau_{AB_1B_2C}$ satisfying $D_{\tt Sep}(\tau_{AB_1B_2C}) \leq \epsilon$. Note that this implies $D_{\tt Sep}(\tau_{AC}) \leq \epsilon$ for $\tau_{AC}=\sum_b \tau^b$, where $\tau^b=\tr_B((\id_A \otimes B^b \otimes \id_C) \tau_{AB_1B_2C})$. 

To arrive at a relaxation of this problem we follow~\cite{Moroder, renou2021quantum}. Let us consider the operator lists corresponding to Alice and Bob's measurement outcomes, 
\begin{align*}
	\mathcal{A} &= \left\{ A_1:= A_{1|1}, A_2:= A_{1|2}, A_3:=A_{1|3} \right\} \\
	\mathcal{C} &= \left\{C_{1|1}, C_{1|2}, C_{1|3}, C_{1|4}, C_{1|5}, C_{1|6} \right\}.
\end{align*}
Now let us take the list of monomials up to degree $n$ of elements in $\mathcal{A}$ including the identity to define the list of monomials $\mathcal{A}^n$. Let us label each monomial by the tuple $\alpha$ of measurement labels of its operators, e.g.  for $A_{1}A_3 A_1$, we obtain the monomial $M_\alpha$ with label $\alpha=(1,3,1)$, the identity has label $\alpha=(0)$.
Now let us define an orthonormal set of  vectors $\{\ket{\alpha}\}_\alpha$ 
and choose a basis $\{ \ket{k} \}_k$ for the Hilbert space that the operators in $\mathcal{A}$ act on. Then we can define the map $$\Omega_A^n(\rho)=\sum_k K_k^n \rho \ (K_k^n)^\dagger, $$
with Kraus operators $K_k^n=\sum_{\alpha} \ket{\alpha}  \bra{k} M_\alpha$,
where $M_\alpha$ is the monomial with label $\alpha$.
For $C$ we define an analogous map $\Omega_C^n$. Due to the Kraus representation, the map $\Omega_A^n \otimes \Omega_C^n$ is completely positive, and thus for any quantum state $\rho_{AC}$ leads to a positive matrix
$$\Gamma^n(\rho_{AC})=\Omega_A^n \otimes \Omega_C^n(\rho_{AC}),$$
where $\Gamma^n(\rho_{AC})_{(1,1)}= \tr(\rho_{AC})=1$. Notice further that some elements of $\Gamma^n(\rho_{AC})$ are related, namely whenever $M_{\alpha_1} M_{\alpha_2}^\dagger=M_{\alpha_3} M_{\alpha_4}^\dagger$ and $M_{\gamma_1} M_{\gamma_2}^\dagger=M_{\gamma_3} M_{\gamma_4}^\dagger$, then the respective elements of the moment matrix are equal, i.e., $\bra{\alpha_1} \bra{\gamma_1} \Gamma^n(\rho_{AC}) \ket{\alpha_2} \ket{\gamma_2}=\bra{\alpha_3} \bra{\gamma_3} \Gamma^n(\rho_{AC}) \ket{\alpha_4} \ket{\gamma_4}$ (which has to be explicitly imposed in (6) in the main text). 

Notice furthermore that $\Gamma^n(\tau)= \sum_b \Gamma^n(\tau^b)$, where the $\Gamma^n(\tau^b)$ are positive, $\Gamma^n(\tau)_{(1,1)}=1$, and that all elements $P(1,b,1|x,z)$ and the marginals $P(b)$, $P(1,b|x)$ $P(b,1|z)$ appear as elements in the $\Gamma^n(\tau^b)$, thus setting those elements to fixed values.

In addition, $P$ is a non-signalling distribution, thus satisfying the constraints $\sum_a P(a,b,c|x,z)=P(b,c|z)$, $\sum_c P(a,b,c|x,z)=P(a,b|x)$, $\sum_a P(a,b|x)=\sum_c P(b,c|z)=P(b)$, $\sum_b P(b) =1$ and $P(a,b,c|x,z) \geq 0$. We summarize all these constraints as ${\rm \bf NS}(P)$.

We furthermore note that for a real separable state $\sigma_{AC}=\sum_\lambda p(\lambda) \sigma_a^\lambda \otimes \sigma_c^\lambda$ we have that
$$\Gamma^n(\sigma_{AC}) 
= \sum_\lambda p(\lambda) \Omega_A^n(\sigma_A^\lambda) \otimes \Omega_C^n(\sigma_C^\lambda), $$
is also real separable (as the completely positive maps $\Omega_A^n,\Omega_C^n$ are both real). This implies that $\Gamma^n(\sigma_{AC})^{T_A}=\Gamma^n(\sigma_{AC})$~\cite{caves2001entanglement,renou2021quantum}, which can also be checked by direct calculation.

In addition, notice that $\| \tau -\sigma \| \leq \epsilon$ implies that \begin{align}
\begin{pmatrix}
	M & \tau -\sigma\\
	(\tau-\sigma)^\dagger & N
\end{pmatrix}
\geq 0 \nonumber \\
\tr{M}+\tr{N} \leq 2 \epsilon, \label{eq:trace_distance}
 \end{align}
where  
$M,N$ are Hermitian matrices~\cite{Watrous_Book}. For real states $\tau, \sigma$, the conjugate transpose reduces to the transpose and the matrices $M,N$ can further be chosen real, since for any matrix $X$, $X \geq 0$ implies that also $X^T \geq 0$. Applying this to the matrix in \eqref{eq:trace_distance} implies that we can replace $M$ with $(M+M^T)/2$, which is a real matrix. The analogous argument holds for $N$.

Now let us consider $$ \id_2 \otimes \Omega_A^n \otimes \Omega_C^n \begin{pmatrix}
M & \tau -\sigma\\
(\tau-\sigma)^T & N
\end{pmatrix} = \begin{pmatrix}
	\Gamma^n(M) & \Gamma^n(\tau -\sigma)\\
	\Gamma^n((\tau-\sigma))^T & \Gamma^n(N)
\end{pmatrix}, $$
which is positive as the map $\id_2 \otimes \Omega_A^n \otimes \Omega_B^n$ is completely positive. Furthermore, $\Gamma^n(M)_{(1,1)}=\tr(M)$ and $\Gamma^n(N)_{(1,1)}=\tr(N)$.

Overall, this leads us to the semidefinite programming problem (6) of the main text.

\section{Linear bound on trace distance from Bell score via operator norm}
\label{app:linear_bounds}

Consider a $n$-partite scenario where $n$ players perform measurements with input $x_i$ to produce an output $a_i$, giving rise to the behaviors
\begin{equation}
\text{P}(a_1,\dots, a_n|x_1\dots x_n).    
\end{equation}
We denote the input and output strings  with $\bm a =(a_1,\dots, a_n)$ and $\bm x= (x_1,\dots, x_n)$.  Let $\mathcal{B}$ be a linear Bell test which associates a score to any behavior $\cB(\text{P})\in \mathds{R}$.\\

Next, let $\tt Set$ denote some quantum state of the systems measured in the Bell test (there can be more that one system per party in a network scenario). We denote with $\cB_\mathds{\tt Set}^{\rm sup}$ and $\cB_\mathds{\tt Set}^{\rm inf}$ the maximal and minimal scores of the Bell test that can be obtained 
\begin{align}
\cB_\mathds{\tt Set}^{\rm sup} &= \sup_{\sigma \in \tt Set, {\rm POVMs}} \cB({\rm P})\\    
\cB_\mathds{\tt Set}^{\rm inf} &= \inf_{\sigma \in \tt Set, {\rm POVMs}} \cB({\rm P}),   
\end{align}
where $\text{P}$ is obtained by applying arbitrary measurements (consistent with a black-box description of the setup) on some state $\sigma$ from $\tt Set$.\\

\begin{proposition*} Consider a linear Bell test $\cB$ and a behavior ${\rm P}(\bm a |\bm x )$ leading to a Bell score $\cB({\rm P})> \cB_\mathds{\tt Set}^{\rm sup}$, for some quantum state set $\tt Sep$. 
Then, the quantum state $\rho$ underlying the behaviour must satisfy
\begin{equation}
    D_{\tt Set}(\rho) \geq \frac{\cB({\rm P}) - \cB_\mathds{\tt Set}^{\rm sup}}{ \cB_\mathds{\tt All}^{\rm sup} - \cB_\mathds{\tt All}^{\rm inf}},
\end{equation}
where $\tt All$ is the set of all density operators.\\
\end{proposition*}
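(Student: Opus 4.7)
The plan is to exploit the linearity of the Bell functional both in the state and in the Bell operator associated to any fixed choice of measurements. Concretely, fix the POVMs that realise the behavior $\mathrm{P}$ on the underlying state $\rho$. Since $\mathcal{B}$ is linear in the behavior and the behavior is linear in the state, there exists a Hermitian Bell operator $\mathcal{B}_{\mathrm{op}}$ on the relevant Hilbert space such that $\mathcal{B}(\mathrm{P}_\tau) = \tr(\mathcal{B}_{\mathrm{op}}\,\tau)$ for every state $\tau$ probed with these same POVMs. In particular, applied to $\rho$ one recovers $\mathcal{B}(\mathrm{P})$, and applied to any $\sigma$ one obtains a behavior $\mathrm{P}_\sigma$ compatible with the same causal/operational description of the experiment.

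Next, I would pick $\sigma \in \tt Set$ (or a sequence $\sigma_n$ if the infimum is not attained) achieving $\tfrac{1}{2}\|\rho-\sigma\| = D_{\tt Set}(\rho)$. Because $\sigma\in\tt Set$, evaluating the Bell operator on it gives a behavior in $\tt Set$, hence $\mathcal{B}(\mathrm{P}_\sigma) = \tr(\mathcal{B}_{\mathrm{op}}\sigma) \leq \mathcal{B}_{\tt Set}^{\rm sup}$. The gap we want to control is therefore
\begin{equation*}
\mathcal{B}(\mathrm{P}) - \mathcal{B}_{\tt Set}^{\rm sup} \;\leq\; \mathcal{B}(\mathrm{P}) - \mathcal{B}(\mathrm{P}_\sigma) \;=\; \tr\!\bigl(\mathcal{B}_{\mathrm{op}}(\rho-\sigma)\bigr).
\end{equation*}

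The core estimate is then a Jordan decomposition argument. Write $\rho-\sigma = \Delta_+ - \Delta_-$ with $\Delta_\pm\geq 0$ orthogonally supported and $\tr\Delta_+ = \tr\Delta_- = D(\rho,\sigma)$ (the common value being $D_{\tt Set}(\rho)$ by choice of $\sigma$). Normalising $\tau_\pm := \Delta_\pm / \tr\Delta_\pm$ gives bona fide density operators, so by definition of $\mathcal{B}_{\tt All}^{\rm sup}$ and $\mathcal{B}_{\tt All}^{\rm inf}$ (the supremum and infimum of $\tr(\mathcal{B}_{\mathrm{op}}\cdot)$ over \emph{all} states, with the measurements still fixed, which is only more restrictive than the supremum/infimum in the statement), we have $\tr(\mathcal{B}_{\mathrm{op}}\tau_+) \leq \mathcal{B}_{\tt All}^{\rm sup}$ and $\tr(\mathcal{B}_{\mathrm{op}}\tau_-)\geq \mathcal{B}_{\tt All}^{\rm inf}$. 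Combining,
\begin{equation*}
\tr\!\bigl(\mathcal{B}_{\mathrm{op}}(\rho-\sigma)\bigr) \;\leq\; D_{\tt Set}(\rho)\,\bigl(\mathcal{B}_{\tt All}^{\rm sup} - \mathcal{B}_{\tt All}^{\rm inf}\bigr),
\end{equation*}
and rearranging yields the claimed lower bound on $D_{\tt Set}(\rho)$.

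The main subtlety, rather than any computational obstacle, is a bookkeeping one: one must check that the quantities $\mathcal{B}_{\tt All}^{\rm sup/\rm inf}$ defined globally (over all states and all POVMs) correctly upper/lower bound $\tr(\mathcal{B}_{\mathrm{op}}\tau_\pm)$ for the specific POVMs fixed at the beginning; this is automatic since varying the POVMs only widens these extrema. A secondary point is handling non-attainment of the infimum in $D_{\tt Set}(\rho)$, which is resolved by applying the bound to $\sigma_n$ with $\tfrac{1}{2}\|\rho-\sigma_n\| \to D_{\tt Set}(\rho)$ and passing to the limit. No operator-norm computation is actually needed, although the same result could be obtained more loosely by $|\tr(\mathcal{B}_{\mathrm{op}}(\rho-\sigma))|\leq \|\mathcal{B}_{\mathrm{op}}\|_\infty\,\|\rho-\sigma\|_1$; the Jordan-decomposition route is tighter and gives exactly the denominator $\mathcal{B}_{\tt All}^{\rm sup}-\mathcal{B}_{\tt All}^{\rm inf}$ appearing in the statement.
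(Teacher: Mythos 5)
Your proposal is correct and follows essentially the same route as the paper: fix the measurements to obtain a Bell operator, use $\tr(\mathcal{B}_{\mathrm{op}}\sigma)\leq\cB_{\tt Set}^{\rm sup}$ for $\sigma\in{\tt Set}$, and bound $\tr(\mathcal{B}_{\mathrm{op}}(\rho-\sigma))$ via the Jordan decomposition of $\rho-\sigma$ into normalised positive parts of trace $D(\rho,\sigma)$, compared against $\cB_{\tt All}^{\rm sup}-\cB_{\tt All}^{\rm inf}$. The only cosmetic difference is that the paper proves the bound for every $\sigma\in{\tt Set}$ and then takes the infimum, rather than selecting a (near-)minimiser as you do.
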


\begin{proof} The behavior ${\rm P}(\bm a |\bm x )$ is obtained by applying some measruemtns on the state $\rho$. Fixing these measuremnts in the Bell test defines a Bell operator $\rm B$
such that 
$$
\tr\, {\rm B} \, \rho = \cB({\rm P}).
$$
Performing the same measurements on any state $\sigma \in \tt Set$ we are guaranteed to get
$\tr\, {\rm B} \, \sigma \leq\cB_\mathds{\tt Set}^{\rm sup}$, 
leading to the bound
\begin{equation}
\tr \, {\rm B}(\rho - \sigma) \geq \cB({\rm P}) - \cB_\mathds{\tt Set}^{\rm sup}.
\end{equation}
Here, the difference of the two density operators can always be decomposed as
$\rho - \sigma = P - Q$, where $P$ and $Q$ satisfy $0\preceq P,Q \preceq 1$ with $\tr P = \tr Q = \frac{1}{2}\| \rho -\sigma\| = D(\rho,\sigma)$, and are thus proportional to density operators $\varrho_P =\frac{P}{ D(\rho,\sigma)}$ and $\varrho_Q= \frac{Q}{ D(\rho,\sigma)}$. We thus find
\begin{align}
    \tr \, {\rm B}\, (\rho - \sigma) &= \tr\, {\rm B}\, (P-Q) = D(\rho,\sigma)\,  \tr\ {\rm B}\,(\varrho_P-\varrho_Q) \leq D(\rho,\sigma)  ( \cB_\mathds{\tt All}^{\rm sup} - \cB_\mathds{\tt All }^{\rm inf}).
\end{align} 
Combining the above bounds we find that
$$
D(\rho,\sigma) \geq \frac{\cB({\rm P}) - \cB_\mathds{\tt Set}^{\rm sup}}{ \cB_\mathds{\tt All}^{\rm sup} - \cB_\mathds{\tt All}^{\rm inf}}
$$
must hold for all states $\sigma$ in $\tt Set$, proving the result.
\end{proof}

Again it is worth noting that the statement is valid both in Complex and Real Quantum Theories.

\section{The real simulation model of~\cite{mckague2009simulating}}

\label{app: simulation}
Here we discuss the simulation of complex quantum models with reals ones, following \cite{mckague2009simulating}, and display some additional properties.

Consider $n$ quantum systems $\bm C = C_1 \dots C_n$ described by $d_i$ dimensional quantum systems. A pure state of the systems is then represented by a vector in the associated Hilbert space
\begin{equation}\label{eq: complex state}
    \ket{\Psi}_{\bm C} \in \C^{d_1}\otimes \dots \otimes\C^{d_n}.
\end{equation}
Mixed states are represented by density operator $\rho_{\bm C}$, evolution and measurement by completely positive (CP) maps $\cE_{\bm C'}$ and POVMS $\{E_{\bm C'}^x\}$ acting on subsystems $\bm C' $ of $\bm C$. The goal is to present a quantum model with the same number of real quantum systems $\bar {\bm R} = \bar R_1 \dots \bar R_n$ represented on real Hilbert spaces $\R^{d_1' }\otimes \dots \otimes\R^{d_n' }$, capable to simulate the physics of the complex quantum models with the same `computational graph'. That is, operations acting on $\bm C'$ must act on the same subsystem $\bar {\bm R}'$ in both models.

To do so, following~\cite{mckague2009simulating} we set $d_i' =2 d_i$ to obtain $\R^{2 d_i }\simeq \R^{d_i}\otimes \R^{2}$ and decompose each local system according to this partition as $\bar{R}_i=R_i L_i$. The global system given by the collection of rebits ${\bm L} = L_{1}\dots L_n$ plays the role of a real-quantum reference frame of complexness, and is encoded in a logical rebit subspace defined by the two entangled states $\{\ket{\mathcal R}_{\bm L}, \ket{\mathcal I}_{\bm L}\}$. The states can be conveniently expressed as 
\begin{align}
     \ket{\mathcal R}_{\bm L} &:= \frac{1}{\sqrt 2}\big(\bigotimes_i \ket{y_+}_{L_i}+\bigotimes_i \ket{y_-}_{L_i}\big) \\
     \ket{\mathcal I}_{\bm L} &:= \frac{\ii}{\sqrt 2}\big(\bigotimes_i \ket{y_+}_{L_i}-\bigotimes_x \ket{y_-}_{L_i}\big),
\end{align}
where $\ket{y_\pm} = (\ket{0}\pm \ii\ket{1})/\sqrt 2$ are the (complex) eigenstates of the Pauli $\sigma_Y$ operator, while both global states above are real. This encoding has the property that the action of the real unitary operator $
\mathds{J}:=\ii \, \sigma_Y=\ketbra{0}{1}-\ketbra{1}{0}$ on any of the rebits $L_j$ mimics the action of the imaginary unit on complex Hilbert spaces. More precisely, one can verify that 
\begin{equation}\label{eq: IR sim}
\mathds{J}_{L_j} \ket{\mathcal{R}}_{\bm L} = \ket{\mathcal{I}}_{\bm L} \quad \text{and} \quad \mathds{J}_{L_j} \ket{\mathcal{I}}_{\bm L} = -\ket{\mathcal{R}}_{\bm L}.
\end{equation}
The addition of such a delocalized reference frame system allows one to simulate complex quantum models with real ones using the following representation of states and operators
\begin{align}
  \label{eq: state sim pure}  \ket{\Psi}_{\bm C} &\simeq \ket{\Psi}_{\bm R \bm L} := \ket{\Psi^{\rm Re}}_{\bm R}\ket{\mathcal R}_{\bm L} + \ket{\Psi^{\rm Im}}_{\bm R}\ket{\mathcal I}_{\bm L} \\
  \label{eq: op} 
    M_{\bm C'}& \simeq M_{\bm R'\bm L'}:= M_{\bm R'}^{\rm Re} \otimes \id_{\bm L'} + M_{\bm R'}^{\rm Im} \otimes \mathds{J}_{\bm L'}
\end{align}
with  $\bra{\Psi}_{\bm C} \simeq \bra{\Psi}_{\bm R \bm L}$ and $ M_{\bm C'}^\dag \simeq  M_{\bm R'\bm L'}^T  = (M_{\bm R'}^{\rm Re})^T \otimes \id_{\bm L'} - (M_{\bm R'}^{\rm Im})^T \otimes \mathds{J}_{\bm L'}$ . Here, the superscripts denote the real and imaginary part in the computational  basis and $ \mathds{J}_{{\bm L}'}$ can be realized by any $ \mathds{J}_{ L_i}$ for any choice of qubit $L_i$ in $\bm L'$. For completeness let us explicitly verify the consistency of the representation (which follows from Eq.~\eqref{eq: IR sim})
\begin{align}
\ket{\Omega}_{\bm C} &= M_{\bm C'} \ket{\Psi}_{\bm C} \\  \simeq \ket{\Omega}_{\bm R \bm L} &= \ket{\Omega^{\rm Re}}_{\bm R}\ket{\mathcal R}_{\bm L} + \ket{\Omega^{\rm Im}}_{\bm R}\ket{\mathcal I}_{\bm L}\\
& = (M_{{\bm R}'}^{\rm Re}\ket{\Psi^{\rm Re}}_{\bm R} -M_{{\bm R}'}^{\rm Im}\ket{\Psi^{\rm Im}}_{\bm R} )\ket{\mathcal R}_{\bm L} + (M_{{\bm R}'}^{\rm Re}\ket{\Psi^{\rm Im}}_{\bm R} + M_{{\bm R}'}^{\rm Im}\ket{\Psi^{\rm Re}}_{\bm R} )\ket{\mathcal I}_{\bm L} \\
&= M_{\bm R'\bm L'} \ket{\Psi}_{\bm R\bm L}
\\
\braket{\Omega}_{\bm C} &= \braket{\Omega^{\rm Re}}_{\bm R} + \braket{\Omega^{\rm Im}}_{\bm R} =\braket{\Omega}_{\bm R \bm L}.
\end{align}
 One also verifies that in the simulation unitaries (Hermitian) operators become orthogonal (symmetric), while sets of operators describing POVMs and Kraus operators (representing a CP map) remain POVMs and Kraus operators respectively. Hence the representation in Eqs.~(\ref{eq: state sim pure},\ref{eq: op}) allows one to simulate any complex quantum model with a real quantum one, provided that all parties, or nodes in the computational graph, have access to a qubit from the reference frame.  

\subsection{Global-phase-invariant state representation}

 Finally, following \cite{mckague2009simulating}, one notices that the real state representation in Eq.~\eqref{eq: state sim pure} is not invariant under global phase transformations of the complex state $\ket{\Psi}_{\bm C}\to e^{\ii \varphi} \ket{\Psi}_{\bm C}$. In particular, we find that 
\begin{align}
  \ii \ket{\Psi}_{\bm C} &\simeq \mathds{J}_{\bm L} \ket{\Psi}_{\bm R \bm L} := \ket{\Psi^{\rm Re}}_{\bm R}\ket{\mathcal I}_{\bm L} - \ket{\Psi^{\rm Im}}_{\bm R}\ket{\mathcal R}_{\bm L}.
\end{align}
A simple way to obtain an invariant real representation is to explicitly dephase the real state 
\begin{equation}
\ketbra{\Psi}_{\bm R \bm L} \mapsto \mathcal{D}\left [\ketbra{\Psi}_{\bm R \bm L}\right]= \frac{1}{2}(\ketbra{\Psi}_{\bm R \bm L} + \mathds{J}_{\bm L} \ketbra{\Psi}_{\bm R \bm L} \mathds{J}_{\bm L}^T),
\end{equation}
or equivalently preparing the reference frame with an auxiliary qubit and tracing it out $\tr_{L'} \ketbra{\Psi}_{\bm R \bm LL'} =  \mathcal{D}\left [\ketbra{\Psi}_{\bm R \bm L}\right]$. 
After including such dephasing the representation of complex density operators becomes
\begin{align}\label{eq:inv rep}  
\varrho_{\bm C} \simeq  \varrho_{\bm R \bm L} := \varrho^{\rm Re}_{\bm R} \otimes \left(\frac{ \ketbra{\mathcal{R}}+\ketbra{\mathcal{I}}}{2}\right)_{\bm L} +\varrho^{\rm Im}_{\bm R} \otimes \left(\frac{ \ketbra{\mathcal{I}}{\mathcal{R}}-\ketbra{\mathcal{R}}{\mathcal{I}}}{2}\right)_{\bm L}. 
\end{align}
By construction the state keeps its form after tracing out unused reference frame rebits $ \tr_{L'} \varrho_{\bm R \bm L L' } =  \varrho_{\bm R \bm L}$.

Two additional remarks are worth adding. First, remark that the marginal state of the reference frame remains unchanged (since  $\tr \varrho_{\bm R}^{\rm Re}=1$ and $\tr \varrho_{\bm R}^{\rm Im}=0$)
\begin{align}\label{eq: rho gen}
    \bar \varrho_{\bm L}^{(n)} &:= \tr_{\bm R}\,   \varrho_{\bm R \bm L} =\left(\frac{ \ketbra{\mathcal{R}}+\ketbra{\mathcal{I}}}{2}\right)_{\bm L} 
    = \frac{1}{2}\left(\bigotimes_i \ketbra{y_+}_{L_i} +\bigotimes_i \ketbra{y_-}_{L_i}\right).
\end{align}

Second, the dephasing map $\mathcal{D}$ commutes with all the valid transformation in the real simulation model, i.e. those representing a complex equivalent, hence it can be applied to the initial state of the reference frame. In other words, to run the simulation strategy it is sufficient for the parties to share the reference frame in the state  $\bar \varrho_{\bm L}^{(n)} $ in Eq.~\eqref{eq: rho gen}.

\subsection{The reference frame state is bound entangled}

Consider a bipartition of all the frame qubits $\bm L = \bar{\bm L} \bar{\bm L}^C$. With respect to it, the reference frame state $\bar \varrho_{\bm L}^{(n)} $ in Eq.~\eqref{eq: rho gen} reads
\begin{align}\label{eq: ref state bi}
    \bar \varrho^{(n)}_{ \bar{\bm L} \bar{\bm L}^C} &= \frac{1}{2} \left(\ketbra{\bm y_+}_{\bar{\bm L}} \otimes \ketbra{\bm y_+}_{\bar{\bm L}^C} + \ketbra{\bm y_-}_{\bar{\bm L}} \otimes \ketbra{\bm y_-}_{\bar{\bm L}^C} \right)
\end{align}
where $\ket{\bm y_\pm}_{\bar{\bm L}} = \bigotimes_{L_i\in \bar{\bm L}} \ket{y_\pm}_{L_i}$. Expressed in the product basis $\{\ket{\mathcal{R},\mathcal{R}}_{\bar{\bm L} \bar{\bm L}^C} ,\ket{\mathcal{R},\mathcal{I}}_{\bar{\bm L} \bar{\bm L}^C} ,\ket{\mathcal{I},\mathcal{R}}_{\bar{\bm L} \bar{\bm L}^C} ,\ket{\mathcal{I},\mathcal{I}}_{\bar{\bm L} \bar{\bm L}^C} \}$ the state reads 
 \begin{equation}
    \bar \varrho^{(n)}_{\bar{\bm L} \bar{\bm L}^C} = \left(
\begin{array}{cccc}
 \frac{1}{4} & 0 & 0 & -\frac{1}{4} \\
 0 & \frac{1}{4} & \frac{1}{4} & 0 \\
 0 & \frac{1}{4} & \frac{1}{4} & 0 \\
 -\frac{1}{4} & 0 & 0 & \frac{1}{4} \\
\end{array}
\right) = \frac{1}{2}(\Phi^-  + \Psi^+) .
 \end{equation}
 We have seen in Appendix~\ref{sec: EF of rho} that this state has (across the bipartition) one ebit of entanglement of formation ($E_F(\bar \varrho^{(n)}_{\bar{\bm L} \bar{\bm L}^C})=1$) in Real Quantum Theory. 

In addition, note that this entanglment is bound. To see this note that in complex quantum theory the state is separable, hence no entanglement can be distilled by LOCC from any number of copies of the state. This conclusion remains true in Real Quantum Theory, since real-LOCC transformations remain LOCC when complexified.

\subsection{The reference frame state is locally broadcastable}

Again, consider a bipartition of all the frame qubits $\bm L = \bar{\bm L}\bar{\bm L}^C$, with $m<n$ qubits in the subsystem $\bar{\bm L}$. From Eq.~\eqref{eq: ref state bi} it is straightforward to verify that 
\begin{equation}
    \tr_{\bar{\bm L}^C}\, \bar \varrho^{(n)}_{\bar{\bm L}\bar{\bm L}^C} =\bar \varrho^{(m)}_{\bar{\bm L}}.
\end{equation}\\

Now, starting with $\bm L$ in the state $\bar \varrho^{(n)}_{\bm L}$ bring a fresh qubit $L_1'$ in the state $\ket{0}_{L_1' }$ in contact with $L_1$, and perform the following joint orthogonal transformation
\begin{align}\label{eq: isom}
    U_{L_1 L_1'} &= \ketbra{y_+}_{L_1} \otimes \left( \ketbra{y_+}{0}+\ketbra{y_-}{1}\right)_{L_1' } + \ketbra{y_-}_{L_1} \otimes \left( \ketbra{y_-}{0}+\ketbra{y_+}{1}\right)_{L_1' }\\
    &= \left(
\begin{array}{cccc}
 \frac{1}{\sqrt{2}} & \frac{1}{\sqrt{2}} & 0 & 0 \\
 0 & 0 & \frac{1}{\sqrt{2}} & -\frac{1}{\sqrt{2}} \\
 0 & 0 & \frac{1}{\sqrt{2}} & \frac{1}{\sqrt{2}} \\
 -\frac{1}{\sqrt{2}} & \frac{1}{\sqrt{2}} & 0 & 0 \\
\end{array}
\right)
\end{align}
where the matrix is written in the product computational basis. By Eq.~\eqref{eq: isom} we have 
\begin{equation}\label{eq: local iso}
U_{L_1 L_1'} \ket{y_+}_{L_1}\ket{0}_{L_1' }= \ket{y_+}_{L_1}\ket{y_+}_{L_1' } \quad \text{and}  \quad U_{L_1 L_1'} \ket{y_-}_{L_1}\ket{0}_{L_1' } = \ket{y_-}_{L_1}\ket{y_-}_{L_1' }
\end{equation}
implying
\begin{equation}
    U_{L_1 L_1'} \left(\bar \varrho^{(n)}_{\bm L}\otimes \ketbra{0}_{L_1' }\right) U_{L_1 L_1'}^T = \varrho^{(n+1)}_{\bm L L_1'}
\end{equation}
One can thus locally expand the reference frame state to include one more \reqbit. This procedure can be repeated by all the parties leading to 
\begin{equation}
    \bigotimes_i U_{L_i L_i'} \left(\bar \varrho^{(n)}_{\bm L}\otimes \ketbra{\bm 0}_{\bm L'  }\right)    \bigotimes_i U_{L_i L_i'}^T = \bar \varrho^{(2n)}_{\bm L \bm L'},
\end{equation}
and we already know that $\tr_{\bm L' } \varrho^{(2n)}_{\bm L \bm L'} = \varrho^{(n)}_{\bm L }$ and $\tr_{\bm L } \varrho^{(2n)}_{\bm L \bm L'} = \varrho^{(n)}_{\bm L'}$, demonstrating local broadcasting~\cite{Pianino} of the entangled state $\bar \varrho^{(n)}_{\bm L}$.
Note that in complex quantum theory local broadcasting is only possible for the so-called classical-classical correlated states~\cite{Pianino} -- a subset of separable states describing correlated classical registers. This is consistent with  $\bar \varrho_{\bm L}^{(n)} 
    = \frac{1}{2}\left(\bigotimes_i \ketbra{y_+}_{L_i} +\bigotimes_i \ketbra{y_-}_{L_i}\right)$ describing $n$ copies of a random bit upon complexification. The possibility of locally broadcasting entangled states in Real Quantum Theory exhibits yet another operational difference with the complex quantum theory.

Finally, let us also verify that the local broadcasting of the reference frame works for the global state $\varrho_{\bm R \bm L}$ in Eq.~\eqref{eq:inv rep}. To shorten the notation we define the real CPTP map (isometry)
\begin{equation}
    {\rm LBC}_{L_i}[\bullet]:= U_{L_i L_i' } \left(\bullet \otimes \ketbra{0}_{L_i' } \right) U_{L_i L_i' }^T,
\end{equation}
and compute
\begin{align}
     {\rm LBC}_{L_i}[\varrho_{\bm R \bm L}] &= \varrho^{\rm Re}_{\bm R} \otimes  {\rm LBC}_{L_i}\left[\left(\frac{ \ketbra{\mathcal{R}}+\ketbra{\mathcal{I}}}{2}\right)_{\bm L} \right]+\varrho^{\rm Im}_{\bm R} \otimes  {\rm LBC}_{L_i}\left[ \left(\frac{ \ketbra{\mathcal{I}}{\mathcal{R}}-\ketbra{\mathcal{R}}{\mathcal{I}}}{2}\right)_{\bm L}\right]\\
     &= \varrho^{\rm Re}_{\bm R} \otimes \left(\frac{ \ketbra{\mathcal{R}}+\ketbra{\mathcal{I}}}{2}\right)_{\bm L L_i'} +\varrho^{\rm Im}_{\bm R} \otimes  \left(\frac{ \ketbra{\mathcal{I}}{\mathcal{R}}-\ketbra{\mathcal{R}}{\mathcal{I}}}{2}\right)_{\bm L}\\
    &= \varrho_{\bm R \bm L L_i' }
\end{align}
using the fact Eqs.~\eqref{eq: local iso} imply   $U_{L_i L_i'}\ket{\mathcal R}_{\bm L}\ket{0}_{L_i' } = \ket{\mathcal R}_{\bm L L_i'}$ and $U_{L_i L_i'}\ket{\mathcal I}_{\bm L}\ket{0}_{L_i' } = \ket{\mathcal I}_{\bm L L_i'}$.

\subsection{Measurement simulation does not consume the reference frame entanglement}

Finally, let us demonstrate that in the real simulation all complex measurements can be simulated without consuming the entanglement present in the reference frame. The most straightforward way to see this is via the local broadcasting of the reference frame state. 

Consider a composite system $\bm C=\bar{\bm C} \bar{\bm C}^C$ prepared in a global state $\varrho_{\bm C} =\varrho_{\bar{\bm C} \bar{\bm C}^C}$, and an arbitrary measurement performed on the subsystem $\bar{\bm C}$ localized in space~\footnote{We are only interested in simulating localizable measurements~\cite{beckman2001causal,eggeling2002semicausal,pauwels2024classification}, which have a physical interpretation consistent with the principle of locality. Those can be decomposed as local measurements plus nonlocal resources -- entanglement and quantum communication (if allowed by the causal structure underlying the setup).}. It is described by a POVM $\{E_{\bar{\bm C}}^{(a)}\}$ on the associated complex Hilbert space. A possible strategy to simulate it in the real model is as follows. Starting form $\varrho_{\bm R \bm L}=\varrho_{\bar{\bm R}\bar{\bm R}^C \bar{\bm L}\bar{\bm L}^C}$ one first locally broadcasts the reference frame qubits $\bar{\bm L}$ to obtain the  state  $\varrho_{\bar{\bm R}\bar{\bm R}^C \bar{\bm L}\bar{\bm L}' \bar{\bm L}^C}$. Then the real POVM 
\begin{equation}
    E_{\bar{\bm R}\bar{\bm L}'}^{(a)} = E_{\bar{\bm R}}^{(a,\rm Re)}\otimes \id_{\bar{\bm L}'} + E_{\bar{\bm R}}^{(a,\rm Im)}\otimes \mathds{J}_{\bar{\bm L}'}
\end{equation}
is performed on the systems $\bar{\bm R}\bar{\bm L}'$. Noting that 
\begin{align}
\tr_{\bar{\bm L}'}\left(\frac{ \ketbra{\mathcal{R}}+\ketbra{\mathcal{I}}}{2}\right)_{\bm L \bar{\bm L}'} &=- \tr_{\bar{\bm L}'} \, \mathds{J}_{\bar{\bm  L}'} \left(\frac{ \ketbra{\mathcal{I}}{\mathcal{R}}-\ketbra{\mathcal{R}}{\mathcal{I}}}{2}\right)_{\bm L \bar{\bm L}'}= \left(\frac{ \ketbra{\mathcal{R}}+\ketbra{\mathcal{I}}}{2}\right)_{\bm L}\\
\tr_{\bar{\bm L}'} \left(\frac{ \ketbra{\mathcal{I}}{\mathcal{R}}-\ketbra{\mathcal{R}}{\mathcal{I}}}{2}\right)_{\bm L \bar{\bm L}'}&=\tr_{\bar{\bm L}'} \, \mathds{J}_{\bar{\bm  L}'}\left(\frac{ \ketbra{\mathcal{R}}+\ketbra{\mathcal{I}}}{2}\right)_{\bm L \bar{\bm L}'} = \left(\frac{ \ketbra{\mathcal{I}}{\mathcal{R}}-\ketbra{\mathcal{R}}{\mathcal{I}}}{2}\right)_{\bm L}
\end{align}
one verifies that the resulting state of the systems $\bar{\bm R}^C\bar{\bm L}\bar{\bm L}^C$ is given by
\begin{align}
    \varrho_{\bar{\bm R}^C \bar{\bm L}\bar{\bm L}^C}^{(a)} &= \tr_{\bar{\bm R}\bar{\bm L}'} \, \left( E_{\bar{\bm R}\bar{\bm L}'}^{(a)} \varrho_{\bar{\bm R}\bar{\bm R}^C \bar{\bm L}\bar{\bm L}^C} \right)\\
    \\
    &= \tr_{\bar{\bm R}} (E_{\bar{\bm R}}^{(a,\rm Re)} \varrho_{\bm R}^{\rm Re} - E_{\bar{\bm R}}^{(a,\rm Im)} \varrho_{\bm R}^{\rm Im}) \otimes \left(\frac{ \ketbra{\mathcal{R}}+\ketbra{\mathcal{I}}}{2}\right)_{\bm L} \\
    &+ \tr_{\bar{\bm R}} (E_{\bar{\bm R}}^{(a,\rm Re)} \varrho_{\bm R}^{\rm Im} + E_{\bar{\bm R}}^{(a,\rm Im)} \varrho_{\bm R}^{\rm Re}) \otimes  \left(\frac{ \ketbra{\mathcal{I}}{\mathcal{R}}-\ketbra{\mathcal{R}}{\mathcal{I}}}{2}\right)_{\bm L}.
\end{align}
The final state $\varrho_{\bar{\bm R}^C \bar{\bm L}\bar{\bm L}^C}^{(a)}$ contains all the reference frame rebits present intitially and indeed simulates $\varrho_{\bar{\bm C}^C}^{(a)} = \tr_{\bar{\bm C}} E_{\bar{\bm C}}^{(a)} \varrho_{\bm C}$. This shows that all complex quantum measurements can be represented in the real simulation model without consuming the reference frame state. 
\end{appendix}

\end{document}